\journalname{Journal of the Franklin Institute}
\theoremstyle{definition}
\newtheorem{definition}{Definition}
\newtheorem*{notation}{Notation}
\newtheorem{case}{Case}
\theoremstyle{plain}
\newtheorem{lemma}{Lemma}
\newtheorem{thm}{Theorem}
\begin{document}

\begin{frontmatter}

%% Title, authors and addresses

%% use the tnoteref command within \title for footnotes;
%% use the tnotetext command for the associated footnote;
%% use the fnref command within \author or \address for footnotes;
%% use the fntext command for the associated footnote;
%% use the corref command within \author for corresponding author footnotes;
%% use the cortext command for the associated footnote;
%% use the ead command for the email address,
%% and the form \ead[url] for the home page:
%%
%% \title{Title\tnoteref{label1}}
%% \tnotetext[label1]{}
%% \author{Name\corref{cor1}\fnref{label2}}
%% \ead{email address}
%% \ead[url]{home page}
%% \fntext[label2]{}
%% \cortext[cor1]{}
%% \address{Address\fnref{label3}}
%% \fntext[label3]{}

\dochead{}
%% Use \dochead if there is an article header, e.g. \dochead{Short communication}
%% \dochead can also be used to include a conference title, if directed by the editors
%% e.g. \dochead{17th International Conference on Dynamical Processes in Excited States of Solids}

\title{Finite-time and bumpless transfer control of asynchronously switched systems: An output feedback control approach}

%% use optional labels to link authors explicitly to addresses:
%% \author[label1,label2]{<author name>}
%% \address[label1]{<address>}
%% \address[label2]{<address>}

\author[1,2]{Mo-Ran Liu}

\author[1,2]{Zhen Wu}

\author[1,2]{Xian Du \corref{cor1}}
\ead{duxian@dlut.edu.cn}

\author[1,2]{Zhongyang Fei}

\address[1]{Key Laboratory of Intelligent Control and Optimization for Industrial Equipment of Ministry of Education, Dalian University of Technology, Dalian 116024, China}
\address[2]{School of Control Science and Engineering, Dalian University of Technology, Dalian 116024, China}

\cortext[cor1]{Corresponding author at: Key Laboratory of Intelligent Control and Optimization for Industrial Equipment of Ministry of Education, Dalian University of Technology, Dalian 116024, China} 

\begin{abstract}
%% Text of abstract
In this paper, the finite-time control and bumpless transfer control are investigated for switched systems under asynchronously switching. First, a class of dynamic output feedback controllers are designed to stabilize the switched system with measurable system outputs. Considering the improvement of transient performance, the bumpless transfer control and finite-time control are further studied in the controller design. To avoid the control bumps, a practical filter is introduced to make the control signal smoother and continuous. Furthermore, to derive a finite-time bounded system state over short-time intervals, the finite-time analysis is considered in managing the switching process with the average dwell time. New criteria are proposed to analyze the finite-time stability and finite-time boundedness for the closed-loop system and solvable conditions are newly proposed to optimize the controller gain. Finally, the superiorities of the proposed method are validated through an application to a boost converter.
\end{abstract}

\begin{keyword}
%% keywords here, in the form: keyword \sep keyword

%% PACS codes here, in the form: \PACS code \sep code

%% MSC codes here, in the form: \MSC code \sep code
%% or \MSC[2008] code \sep code (2000 is the default)
Finite-time control \sep 
Bumpless transfer control \sep
Asynchronously switched systems \sep 
Dynamic output feedback control
\end{keyword}

\end{frontmatter}

%%
%% Start line numbering here if you want
%%
% \linenumbers

%% main text
\section{Introduction} 
\label{sec:introduction}
Switched systems, which consist of more than one subsystems and a switching signal governing the switching process, receive a great deal of attention over the past decades according to their superiorities in modeling practical engineerings, such as aero-engines \cite{Yang-Refined,Zhaoeventtriggered,Liu_Brain}, power converters \cite{Cardim-Converter,Gao_TrajectoryTracking}, networked systems \cite{Ren-Asynchronous,Taoyi_MCL-STGAT} and so on. Many meaningful results are proposed in the stability analysis and control synthesis for switched systems \cite{8141950,8820078,Ragul_Exponential,10113589}. For example, under average dwell time (ADT) switching, the asynchronously switched control is considered for the switched systems with time delays \cite{ZHANG2010953,Wang-stabilization,LIU20154503,WANG20123159,LI201911792}. In \cite{Fei-Event-Triggered}, the stability of switched system is further analyzed with frequent asynchronism and ADT approach. Considering that it is conservative to set the common ADT parameters for all subsystems in a mode-independent manner, the mode-dependent average dwell time (MDADT) is proposed for the problem of stability analysis of discrete-time switched systems \cite{Zhao-stability}. 

Especially, in the most recent works on switched systems, the multiple controller design, which means each subsystem has its own sub-controller, is widely investigated to release the conservatism. However, during the multiple controller switching, the discontinuous control signal may lead to control bumps, which would cause damages to the transient performance of the system \cite{Wang-Improving}. An effective method to avoid or reduce the control bumps is the so-called the bumpless transfer control \cite{Zong-tolerant}. Initially, Hanus proposed the concept of bumpless transfer control to solve the dismatch between the actual input and the controller output \cite{HANUS1987729}. Then, A norm constraint condition for control input is proposed as the bumpless transfer performance \cite{ZHAO201947}. Based on this, a simple controller structure containing a differentiator, an integrator and a common compensator is proposed to solve the bumpless transfer control problem for linear and nonlinear systems \cite{Shi-A-Bumpless}. In addition, the first-order low-pass filter is a practical solution to address the bumpless transfer control problem \cite{shi-bumpless}. However, a phase lag is additionally introduced, which aggravates the transient performance of the system states.

Moreover, the finite-time analysis is widely adopted to guarantee the boundedness of the system state over short-time intervals \cite{WuFinitetime,LIN20115982,AMATO20011459,mitra_romeo_sangiovanni-vincentelli_1986,Liu-finite-time}.
The concept of finite-time stability is precisely defined by Weiss \cite{Weiss-Finite}. Specifically, a system is said to be finite-time stable if, given a bound on the initial condition, its state remains within a prescribed bound in a fixed time interval. Recently, the finite-time stability is studied considerably. For example, in \cite{ZHU2021100975}, the finite-time boundedness is extended to nonlinear impulsive switched systems. Considering the linear time-varying time-delay system, the finite-time stability and finite-time contractive stability are developed with the Lyapunov–Razumikhin method \cite{LI2019135}. However, in the existing works on bumpless transfer control for switched systems, only the asymptotic stability of the system is considered, but the system state is not guaranteed to be bounded in finite time, which leaves much room for improvment.

{To restrict the bumps of control signal and the transient performance of the system states, in this paper, we adopt the bumpless transfer control method with a more simple structure and establish the sufficient conditions for the finite-time stability of the closed-loop system in the framework of average dwell time (ADT). In contrast to the existing works, this paper has the following main contributions.}

\begin{enumerate}
	\setlength{\itemsep}{0pt}
	\setlength{\topsep}{0ex}
	\setlength{\parsep}{0pt}
	\setlength{\parskip}{0pt}
	\item {New criteria are proposed to analyze both the bumpless transfer control and finite-time stability to improve transient performance.}
	\item {By decoupling the closed-loop system, accordingly solvable inequalities are designed to optimize the controller gain, which improves the practicality of the proposed criterion.}
	\item {Considering the uncertain environment in practical engineering, the finite-time boundedness and $H_\infty$ performance of the switching system are further analyzed with the exogenous disturbance.}
	\item {Asynchronous situations from dismatches between the system and the controller are considered in the analysis, which effectively reduces conservatism.}
\end{enumerate}

The remainder of this paper is organized as follows: Section \ref{problem} provides the system model and preliminary analysis. In Section \ref{result}, the finite-time stability and boundedness analysis are presented. In Section \ref{optimization}, the solvable conditions are provided to optimize the controller gain. In Section \ref{example}, the effectiveness of the proposed approach is verified by numerical arithmetic examples. Finally, Section \ref{conclusions} draws conclusions.
\begin{notation}
	$\mathbb{R}^n$ represents the $n$-dimensional Euclidean space. $\mathbb{Z^+}$ denotes the set of non-negative integers. Matrix $M>0$ $(\le 0)$ means $M$ is positive (non-positive) definite and real symmetric. The notation $*$ in a symmetry matrix refers to the symmetry part of the matrix. $\lambda_{\min }(M)$ and $\lambda_{\max }(M)$ represent the minimum and maximum eigenvalues of matrix $M$, respectively. $M\left(m,n\right)$ denotes the element at the $m$th row and $n$th column in the matrix $M$. $L_2\left[0,+\infty\right)$ is the set whose elements are satisfied with $\left\| {x} \right\|^2 < \infty$. $\textbf{\textit{I}}$ and $\textbf{0}$ stand for the identity matrix and zero matrix with proper dimension, respectively.\\
\end{notation}

\section{Problem Formulation and Preliminaries}
\label{problem}
Consider the continuous-time switched linear system
\begin{align}\label{1}
	\left\{ \begin{array}{l}
		\dot x(t) = {A_{\sigma (t)}}x(t) + {B_{\sigma (t)}}u(t) + {D_{\sigma (t)}}\omega (t),\\[0.3em]
		y(t) = {C_{\sigma (t)}}x(t) + {E_{\sigma (t)}}\omega (t),
	\end{array} \right.
\end{align}
where $x(t) \in \mathbb{R}^{n_x}$, $u(t) \in \mathbb{R}^{n_u}$ and $y(t) \in \mathbb{R}^{n_y}$ represent the state vector, the control input and the controlled output, respectively. $\omega(t) \in \mathbb{R}^{n_\omega}$ is the external disturbance belonging to $L_2\left[0,+\infty\right)$. $\sigma:[0,\infty] \rightarrow \mathcal{S} = \left\{1,2,\cdots,S\right\}$ is a piecewise constant function responding to the switching signal, and $S\in \mathbb{Z}^+$ stands for the number of subsystems. Define a time sequence $\left\{t_k,k \in \mathbb{Z}^+\right\}$ satisfying $0 = t_0 < t_1 < \cdots < t_M$, and $t_k$ is the switching instant. For subsystem $i$, $\forall i \in \mathcal{S}$, $A_i$, $B_i$, $C_i$, $D_i$ and $E_i$ are real matrices with suitable dimensions.

For system (\ref{1}), the dynamic output feedback switching controller is constructed as
\begin{align}
	\left\{ \begin{array}{l}
		{{\dot x}_c}(t) = {A_{c\hat \sigma (t)}}{x_c}(t) + {B_{c\hat \sigma (t)}}y(t),\\[0.3em]
		\;\;{u}(t) = {C_{c\hat \sigma (t)}}{x_c}(t) + {D_{c\hat \sigma (t)}}y(t),\nonumber
	\end{array} \right.
\end{align}
where $x_c(t) \in \mathbb{R}^{n_{x_c}}$ is the controller state vector, $\hat \sigma:[0,\infty] \rightarrow \mathcal{S}$ represents the controller mode, ${A_{c\hat \sigma (t)}}$, ${B_{c\hat \sigma (t)}}$, ${C_{c\hat \sigma (t)}}$ and ${D_{c\hat \sigma (t)}}$ are real matrices to be determined. {Unlike static output feedback, a dynamic output feedback controller incorporates internal states, enabling it to store and utilize historical information \cite{Zhang_Improved}. This feature allows the controller to better handle system dynamics as it relies not only on the current output but also on past outputs and control actions.}

It is well known that the output signal of filter is continuous for the input signal without impulse \cite{shi-bumpless}. Hence, it is natrual to employ a filter before control input to achieve bumpless transfer control. To be more specific, the bumpless control signal serves as the output of the original control signal after filtering, which is illustrated as
\begin{eqnarray}\label{3}
	{\dot x_f}(t) = {K_f}({u}(t) - {x_f}(t)).
\end{eqnarray}

Since it takes time to identify the activated mode and apply the matched controllers, the switching signal available to controllers tends to be a delay version for the switching of subsystems. It means that the controller for the $j$th mode is still active although the system has switched to $i$th mode during $\left[t_k,t_k+\tau\right)$, where $\tau$ represents the updating delay. Denote
\begin{align}
	{x_a}(t) = {\left[ {\begin{array}{*{20}{c}}
				{{x^T}(t)}&{x_c^T(t)}&{\Delta x_f^T(t)}
		\end{array}} \right]^T},
	\quad\Delta {x_f}(t) = {x_f}(t) - {D_{ci}}{C_i}x(t) - {C_{ci}}{x_c}(t),\nonumber
\end{align}
then, for all $i,j \in \mathcal{S}, i \ne j$, the closed-loop system is obtained
\begin{align}\label{4}
	{\dot x_a}(t) = \left\{ \begin{array}{l}
		{A_{aij}}{x_a}(t) + {G_{aij}}\omega (t),\; t \in T_\uparrow\left(t_k,t_{k+1}\right),\\[0.3em]
		{A_{ai}}{x_a}(t) + {G_{ai}}\omega (t),\quad t \in T_\downarrow\left(t_k,t_{k+1}\right),
	\end{array} \right.
\end{align}
where $T_\uparrow\left(t_k,t_{k+1}\right)$ and $T_\downarrow\left(t_k,t_{k+1}\right)$ represent the system asynchronous and synchronous with the controller within the interval $\left[t_k,t_{k+1}\right)$, respectively. All system matrices turn into
\begin{align}
	{A_{aij}} &= \left[
	\setlength{\arraycolsep}{0.0165\linewidth}{\begin{array}{ccc}
			{{A_i} + {B_i}{D_{cj}}{C_i}}&{{B_i}{C_{cj}}}&{{B_i}}\\
			{{B_{cj}}{C_i}}&{{A_{cj}}}&0\\
			- {D_{cj}}{C_i}({A_i} + {B_i}{D_{cj}}{C_i}) - {C_{cj}}{B_{cj}}{C_i}&- {D_{cj}}{C_i}{B_i}{C_{cj}} - {C_{cj}}{A_{cj}}&- {K_f} - {D_{cj}}{C_i}{B_i}
	\end{array}} \right],\nonumber\\
	{A_{ai}} &= \left[ \setlength{\arraycolsep}{0.021\linewidth}{\begin{array}{ccc}
			{{A_i} + {B_i}{D_{ci}}{C_i}}&{{B_i}{C_{ci}}}&{{B_i}}\\
			{{B_{ci}}{C_i}}&{{A_{ci}}}&0\\
			- {D_{ci}}{C_i}({A_i} + {B_i}{D_{ci}}{C_i}) - {C_{ci}}{B_{ci}}{C_i}&- {D_{ci}}{C_i}{B_i}{C_{ci}} - {C_{ci}}{A_{ci}}&- {K_f} - {D_{ci}}{C_i}{B_i}
	\end{array}} \right],\nonumber\\
	{G_{aij}} &= \left[ {\begin{array}{*{20}{c}}
			{{D_i}}\\
			{{B_{cj}}{E_i}}\\
			{K_f}{D_{cj}}{E_i} - {D_{cj}}{C_i}{D_i} - {C_{cj}}{B_{cj}}{E_i}
	\end{array}} \right],\qquad\quad
	{G_{ai}} = \left[ {\begin{array}{*{20}{c}}
			{{D_i}}\\
			{{B_{ci}}{E_i}}\\
			{K_f}{D_{ci}}{E_i} - {D_{ci}}{C_i}{D_i} - {C_{ci}}{B_{ci}}{E_i}
	\end{array}} \right],\nonumber
\end{align}

For the later investigation, some useful definitions are provided in the following.
\begin{definition}[\textit{See \cite{Hespanha1999}}]\label{definition1}
	The switching signal $\sigma$ is said to satisfy the average dwell time switching, if there exist $N_0 \in \mathbb{Z}$ and $\tau_a > 0$, such that
	\begin{align}
		N_\sigma (t,T) \le N_0 + \frac{T-t}{\tau_a},\nonumber
	\end{align}
	for any $t \in \left[0,T\right]$, $N_\sigma(t,T)$ represents the total switching times during the time interval $(t,T)$.
\end{definition}
\begin{definition}[\textit{See \cite{liu2014finite}}]\label{definition2}
	The switched system (\ref{1}) with $\omega (t) = 0$ is said to be finite-time stable with respect to $({c_1},{c_2},{T},Q,\sigma )$, if there exist ${c}_{2}>{c}_{1}\ge 0$ and $Q > 0$, such that
	\begin{align}
		x_0^TQ{x_0} < {c_1} \Rightarrow x{(t)^T}Qx(t) < {c_2}, \nonumber
	\end{align}
	for any $t \in \left[0,{T}\right]$, $T>0$.
\end{definition}

\begin{definition}[\textit{See \cite{LIN20115982}}]\label{definition3}
	The switched system (\ref{1}) is said to be finite-time bounded with respect to $({c_1},{c_2},$ ${T},d,Q,\sigma )$, if there exist ${c}_{2}>{c}_{1}\ge 0$, $Q > 0$ and $d>0$, such that
	\begin{align}
		x_0^TQ{x_0} < {c_1} \Rightarrow x{(t)^T}Qx(t) < {c_2}, \int_0^T \omega^T(t)\omega(t)dt \le d,\nonumber
	\end{align}
	for any $t \in \left[0,{T}\right]$, $T>0$.
\end{definition}

\begin{definition}[\textit{See \cite{liu2014finite}}]\label{definition4}
	The switched system (\ref{1}) is said to have finite-time $H_\infty$ performance, if there exist $\gamma_s > 0$, such that
	\begin{align}
		\int_0^T y^T(\tau)y(\tau)d\tau\le\gamma_s^2\int_0^T\omega^T(\tau)\omega(\tau)d\tau,\nonumber
	\end{align}
	under zero initial condition $x(0) = 0$, for $T>0$.
\end{definition}

\begin{lemma}[\textit{See \cite{Zhou2016non}}]\label{lemma1}
	The following conditions involving real	scalar $\varepsilon$ and real matrices $W$, $X$, $Y$, and $Z$ are equivalent.
	
	1) There exist real scalar $\varepsilon$ and and real matrices $W$, $X$, $Y$, and $Z$ such that
	\begin{align}
		\left[ {\begin{array}{*{20}{c}}
				W& * \\
				{Y - \varepsilon ZX}&{\varepsilon Z + \varepsilon {Z^T}}
		\end{array}} \right] < 0,\nonumber
	\end{align}
	
	2) There exist real matrices $W$, $X$ and $Y$ such that
	\begin{align}
		W < 0,\nonumber\\
		W + {X^T}Y + {Y^T}X < 0.\nonumber
	\end{align}
\end{lemma}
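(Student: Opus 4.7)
The plan is to establish the equivalence in both directions using block-matrix manipulations, treating one direction as a routine congruence computation and the other as an application of the projection (elimination) lemma.

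For the direction (1) $\Rightarrow$ (2), the inequality $W < 0$ is immediate from the fact that the (1,1) block of a negative-definite symmetric block matrix must itself be negative definite (obtained by testing against vectors of the form $[v^T,0]^T$). To recover $W + X^T Y + Y^T X < 0$, I would apply the congruence transformation by $\begin{bmatrix} I \\ X \end{bmatrix}$ to the LMI in (1). Direct expansion gives
\begin{equation*}
\begin{bmatrix} I & X^T \end{bmatrix} \begin{bmatrix} W & * \\ Y - \varepsilon Z X & \varepsilon(Z + Z^T) \end{bmatrix} \begin{bmatrix} I \\ X \end{bmatrix} = W + X^T Y + Y^T X,
\end{equation*}
because the $-\varepsilon X^T Z X$ and $-\varepsilon X^T Z^T X$ terms coming from the off-diagonal blocks exactly cancel the $+\varepsilon X^T (Z + Z^T) X$ contribution arising from the (2,2) block. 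Congruence by a full column-rank matrix preserves sign definiteness, so the right-hand side inherits negative definiteness.

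For the direction (2) $\Rightarrow$ (1), my strategy is to recognise the LMI as an instance of the projection lemma in a single slack variable $\Theta := \varepsilon Z$. Under this identification the LMI rewrites as
\begin{equation*}
\begin{bmatrix} W & Y^T \\ Y & 0 \end{bmatrix} + \begin{bmatrix} 0 \\ I \end{bmatrix} \Theta \begin{bmatrix} -X & I \end{bmatrix} + \begin{bmatrix} -X^T \\ I \end{bmatrix} \Theta^T \begin{bmatrix} 0 & I \end{bmatrix} < 0.
\end{equation*}
The projection lemma guarantees solvability in $\Theta$ iff two null-space projections are negative definite, namely the restrictions of the constant term to the null spaces of $\begin{bmatrix} -X & I \end{bmatrix}$ and $\begin{bmatrix} 0 & I \end{bmatrix}$. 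Choosing the bases $\begin{bmatrix} I \\ X \end{bmatrix}$ and $\begin{bmatrix} I \\ 0 \end{bmatrix}$ for these null spaces, the projected inequalities evaluate respectively to $W + X^T Y + Y^T X < 0$ and $W < 0$, which are exactly the hypotheses of (2). A compatible pair $(\varepsilon,Z)$ is then recovered by picking any nonzero $\varepsilon$ and setting $Z = \Theta/\varepsilon$.

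The main obstacle will be the reverse direction. Naive attempts to construct $(\varepsilon,Z)$ directly---for instance taking $Z = -I$ with scalar $\varepsilon$ and trying to tune $\varepsilon$---fail in general, because the Schur complement of the (2,2) block then introduces a $\tfrac{1}{2\varepsilon} Y^T Y$ term that cannot be absorbed for any scalar choice (small $\varepsilon$ blows up this term while large $\varepsilon$ blows up the dual term $\tfrac{\varepsilon}{2} X^T X$). Invoking the projection lemma is the clean way around this; a self-contained treatment would instead need to reproduce the constructive part of that lemma, which essentially is the hard content of this direction.
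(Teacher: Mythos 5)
The paper never proves this lemma---it is imported verbatim from the cited reference \cite{Zhou2016non}---so there is no in-paper argument to compare yours against; judged on its own terms, your proposal is correct. The direction (1) $\Rightarrow$ (2) is exactly right: $W<0$ is the $(1,1)$-block condition, and the congruence by the full-column-rank matrix $\bigl[\begin{smallmatrix} I \\ X \end{smallmatrix}\bigr]$ does produce the cancellation you describe, since the cross terms $-\varepsilon X^TZX-\varepsilon X^TZ^TX$ from the off-diagonal blocks annihilate the $\varepsilon X^T(Z+Z^T)X$ contribution of the $(2,2)$ block, leaving $W+X^TY+Y^TX$. This is also the only direction the paper actually uses: in Appendices A and B the lemma is invoked to pass from the large LMIs $\Phi_{ij}\le 0$ and $\Psi_{ij}\le 0$ to the reduced inequalities, i.e.\ from form (1) to form (2), via precisely such multipliers. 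For (2) $\Rightarrow$ (1), your rewriting with the slack variable $\Theta=\varepsilon Z$, the choice of null-space bases $\bigl[\begin{smallmatrix} I \\ X \end{smallmatrix}\bigr]$ and $\bigl[\begin{smallmatrix} I \\ 0 \end{smallmatrix}\bigr]$, and the resulting projected inequalities $W+X^TY+Y^TX<0$ and $W<0$ are all computed correctly, so the projection (elimination) lemma delivers $\Theta$ and hence $(\varepsilon,Z)$. Your remark that the naive scalar construction $Z=-I$ fails because the Schur-complement terms $\tfrac{1}{2\varepsilon}Y^TY$ and $\tfrac{\varepsilon}{2}X^TX$ cannot be made small simultaneously is accurate and worth keeping. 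The only content left unproved is the projection lemma itself, which you flag explicitly; since the paper takes the entire statement on faith from the reference, that reliance is not a gap relative to anything the paper does.
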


\section{Finite-time stability and boundedness analysis}
\label{result}
In this section, the finite-time stability and boundedness are analyzed for the closed-loop system with an ensured $H_\infty$ performance.

First, for the case of closed-loop system (\ref{4}) without external disturbance, a novel finite-time stability criterion is provided by employing a controller-mode-dependent Lyapunov function and ADT switching approach.
\begin{lemma}\label{lemma2}
	Consider closed-loop system (\ref{4}) with $\omega \equiv 0$ and let $\alpha>0$, $\beta>0$ and $\mu\ge 1$ be given constants. Suppose that there exist $\tilde{P}_{\sigma(t)} = Q^{1/2}P_{\sigma(t)}Q^{1/2}> 0$, $C^1$ functions $V_{\sigma(t)}:\mathbb{R}^n \rightarrow \mathbb{R}$, $\sigma(t)\in \mathcal{S}$, and two class $\mathcal{K}_\infty$ functions $\kappa_1$ and $\kappa_2$ such that
	\begin{align}
		\label{6}
		&\kappa_1(\left\|x\right\|)\le V_{\sigma(t)}(t)\le\kappa_2(\left\|x\right\|),\\
		\label{7}
		&{\dot V_{\sigma(t)}}(t) \le \left\{
		\begin{array}{l}
			\beta {V_{\sigma(t)}}(t),\;\;\;\; t \in T_\uparrow\left(t_k,t_{k+1} \right),\\[0.3em]
			-\alpha {V_{\sigma(t)}}(t),\; t \in T_\downarrow\left(t_k,t_{k+1}\right),
		\end{array}\right.\\
		\label{8}
		&{V_{\sigma(t_k)}}\left({t_k} + \tau \right) \le \mu {V_{\sigma(t_{k-1})}}\left( {{{\left( {{t_k} + \tau } \right)}^ - }} \right),\\
		\label{9}
		&{\lambda _2}{c_1}{e^{\left( {\ln \mu  + \alpha {\tau _d} + \beta {\tau _d}} \right){N_0} - \alpha T}} < {\lambda _1}{c_2}.
	\end{align}
	Then for any switching signal $\sigma$ satisfying
	\begin{align}\label{10}
		{\tau _a} > {\tau _a}^* = \frac{{T\left( {\ln \mu  + \alpha {\tau _d} + \beta {\tau _d}} \right)}}{{\ln \left( {{\lambda _1}{c_2}} \right) - \ln \left( {{\lambda _2}{c_1}} \right) + \alpha T - \left( {\ln \mu  + \alpha {\tau _d} + \beta {\tau _d}} \right){N_0} }},
	\end{align}
	the system (\ref{4}) is finite-time stable with respect to $\left(c_1, c_2, T, Q, \sigma\right)$, where $\lambda_1 = \lambda_{\min }(P_{\sigma (t)})$, $\lambda_2 = \lambda_{\max }(P_{\sigma (0)})$, $\tau_d \buildrel \Delta \over= \max T_\uparrow\left(t_k,t_{k+1}\right)$.
\end{lemma}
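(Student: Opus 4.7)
The plan is to bound the switching Lyapunov function $V_{\sigma(t)}(t)$ along trajectories by iterating the piecewise differential inequality (\ref{7}) on the asynchronous and synchronous pieces of each switching interval, absorbing the jumps at controller-matching instants via (\ref{8}), and then translating the resulting exponential estimate back to the quadratic form $x^T Q x$ using the spectral bounds $\lambda_1,\lambda_2$ associated with $\tilde P_{\sigma(t)} = Q^{1/2}P_{\sigma(t)}Q^{1/2}$.

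First I would partition each interval $[t_k,t_{k+1})$ into the mismatched part $T_\uparrow$ of length $\tau_k \le \tau_d$ and the matched part $T_\downarrow$ of length $\ell_k$. Integrating (\ref{7}) over these pieces yields $V\le e^{\beta \tau_k}V$ across $T_\uparrow$ and $V\le e^{-\alpha \ell_k}V$ across $T_\downarrow$, while (\ref{8}) contributes a single multiplicative factor $\mu$ at each controller update. Iterating telescopingly over all switches in $[0,t]$ and using the crude majorizations that the total time spent in $T_\uparrow$ is at most $N_\sigma(0,t)\tau_d$ and the total time in $T_\downarrow$ is at least $t - N_\sigma(0,t)\tau_d$, I would obtain
\begin{align*}
V_{\sigma(t)}(t) \le \exp\!\bigl(N_\sigma(0,t)(\ln\mu + \alpha\tau_d + \beta\tau_d) - \alpha t\bigr)\, V_{\sigma(0)}(0).
\end{align*}

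Substituting the ADT estimate $N_\sigma(0,t) \le N_0 + t/\tau_a$ from Definition~\ref{definition1}, and combining with the spectral bounds $\lambda_1\, x(t)^T Q x(t) \le V_{\sigma(t)}(t)$ and $V_{\sigma(0)}(0) \le \lambda_2\, x_0^T Q x_0 < \lambda_2 c_1$, I recover a scalar inequality for $x(t)^T Q x(t)$. Requiring its right-hand side to remain below $\lambda_1 c_2$ at the worst point of $[0,T]$ (which the form of the exponent drives to an endpoint of the interval) and isolating $\tau_a$ yields exactly the threshold (\ref{10}); meanwhile condition (\ref{9}) is precisely what guarantees that the denominator appearing in (\ref{10}) is strictly positive so that $\tau_a^{*}$ is a meaningful positive number.

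The main obstacle will be the bookkeeping in the iteration: verifying that exactly one $\mu$ factor accrues per switching event (not per controller event), that the total $T_\uparrow$-length is uniformly bounded by $N_\sigma(0,t)\tau_d$ regardless of how closely switches cluster, and that the mixed decay/growth telescoping across all $N_\sigma(0,t)$ intervals produces the clean exponent $N_\sigma(0,t)(\ln\mu + \alpha\tau_d + \beta\tau_d) - \alpha t$ rather than a pattern-dependent expression. Once this accounting is pinned down, the remainder of the argument is a direct algebraic rearrangement combining (\ref{9}) and (\ref{10}).
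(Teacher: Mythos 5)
Your proposal follows essentially the same route as the paper's proof: iterate the piecewise bound from (\ref{7}) with one factor of $\mu$ per switch from (\ref{8}) to reach $V_{\sigma(t)}(t)\le e^{N_\sigma(0,t)(\ln\mu+\alpha\tau_d+\beta\tau_d)-\alpha t}V_{\sigma(0)}(0)$, then apply the ADT bound on $N_\sigma$, the spectral inequalities $\lambda_1 x_a^TQx_a\le V_{\sigma(t)}(t)$ and $V_{\sigma(0)}(0)\le\lambda_2 c_1$, and rearrange using (\ref{9}) and (\ref{10}). The bookkeeping points you flag (one $\mu$ per switch, $T_\uparrow\le N_\sigma\tau_d$, the clean mixed exponent) are exactly the steps carried out in the paper's chain of inequalities (\ref{12})--(\ref{17}), so no changes are needed.
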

\begin{proof}
	Consider a controller-mode-dependent Lyapunov function
	\begin{align}
		{V_{\sigma (t)}}(t) = x_a^T{\tilde P_{\sigma (t)}}{x_a}.\nonumber
	\end{align}
	Combine (\ref{7}) and (\ref{8}), for any $t \in \left( {0,T} \right)$, using the iterative method,
	\begin{align}\label{12}
		{V_{\sigma (t)}}(t)&\le \mu {e^{ - \alpha {T_ \downarrow }\left( {{t_M},t} \right)}}{e^{\beta {T_ \uparrow }\left( {{t_M},t} \right)}}{V_{\sigma ({t_M})}}({t_M})\nonumber\\
		&\le  \cdots \nonumber\\
		&\le {\mu ^{{N_\sigma }\left( {{t_0},t} \right)}}{e^{ - \alpha {T_ \downarrow }\left( {{t_0},t} \right)}}{e^{\beta {T_ \uparrow }\left( {{t_0},t} \right)}}{V_{\sigma (0)}}(0)\nonumber\\
		&\le {e^{{N_\sigma }\left( {{t_0},t} \right)\ln \mu }}{e^{ - \alpha T\left( {{t_0},t} \right)}}{e^{\left( {\alpha  + \beta } \right){\tau _d}{N_\sigma }\left( {{t_0},t} \right)}}{V_{\sigma (0)}}(0)\nonumber\\
		&\le {e^{\left( {\ln \mu  + \alpha {\tau _d} + \beta {\tau _d}} \right){N_0} - \alpha T}}\times {e^{\left( {\ln \mu  + \alpha {\tau _d} + \beta {\tau _d}} \right)T/{\tau _a}}}{V_{\sigma (0)}}(0),
	\end{align}
	then the following relationship is obtained
	\begin{align}
		{V_{\sigma (t)}}(t) &= x_a^T(t){\tilde P_{\sigma (t)}}x_a(t) \ge {\lambda _{\min }}({P_{\sigma (t)}})x_a^T(t)Qx_a(t) = {\lambda _1}x_a^T(t)Qx_a(t),\nonumber
	\end{align}
	with
	\begin{align}
		\label{14}
		{V_{\sigma (0)}}(0) &= x_a^T(0){\tilde P_{\sigma (0)}}x_a(0) \le {\lambda _{\max }}({P_{\sigma (0)}})x_a^T(0)Qx_a(0) = {\lambda _2}x_a^T(0)Qx_a(0) \le {\lambda _2}{c_1}.
	\end{align}
	According to (\ref{12})–(\ref{14}), it is obtained that
	\begin{align}\label{15}
		x_a^T(t)Qx_a(t) \le& \frac{{{V_{\sigma (t)}}(t)}}{{{\lambda _1}}}\nonumber\\
		<& {e^{\left( {\ln \mu  + \alpha {\tau _d} + \beta {\tau _d}} \right){N_0} - \alpha T}}\times {e^{\left( {\ln \mu  + \alpha {\tau _d} + \beta {\tau _d}} \right)T/{\tau _a}}}\frac{{{\lambda _2}{c_1}}}{{{\lambda _1}}},
	\end{align}
	the following condition is derived,
	\begin{align}
		\ln \left( {{\lambda _1}{c_2}} \right) - \ln \left( {{\lambda _2}{c_1}} \right) &+ \alpha T- \left( {\ln \mu  + \alpha {\tau _d} + \beta {\tau _d}} \right){N_0} > 0.\nonumber
	\end{align}
	By virtue of (\ref{10}), we know
	\begin{align}\label{17}
		\frac{{T}}{{{\tau _a}}} <& \frac{{\ln \left( {{\lambda _1}{c_2}} \right) - \ln \left( {{\lambda _2}{c_1}} \right)}+{\alpha T - \left( {\ln \mu  + \alpha {\tau _d} + \beta {\tau _d}} \right){N_0}}}{{\ln \mu  + \alpha {\tau _d} + \beta {\tau _d}}}.
	\end{align}
	Substituting (\ref{17}) into (\ref{15}) yields
	\begin{align}
		{x_a}^T(t)Q{x_a}(t)
		<& \frac{{{\lambda _2}{c_1}}}{{{\lambda _1}}}{e^{\left( {\ln \mu  + \alpha {\tau _d} + \beta {\tau _d}} \right){N_0} - \alpha T}}\times {e^{\left( {\ln \mu  + \alpha {\tau _d} + \beta {\tau _d}} \right)T/{\tau _a}}}\nonumber\\
		<& \frac{{{\lambda _2}{c_1}{e^{\left( {\ln \mu  + \alpha {\tau _d} + \beta {\tau _d}} \right){N_0} - \alpha T}}}}{{{\lambda _1}}}\times \frac{{{\lambda _1}{c_2}}}{{{\lambda _2}{c_1}{e^{\left( {\ln \mu  + \alpha {\tau _d} + \beta {\tau _d}} \right){N_0} - \alpha T}}}}\nonumber\\
		=& {c_2}.\nonumber
	\end{align}	
	
	It is convinced that the closed-loop system (\ref{4}) is finite-time stable with respect to $\left(c_1, c_2, T, Q, \sigma\right)$.
\end{proof}
{Then, a finite-time boundedness criterion is provided for the case of closed-loop system (\ref{4}) with external disturbance, which is crucial for evaluating the system's performance when subjected to external perturbations, ensuring that the system's state remains within predefined bounds despite the presence of disturbances.}
\begin{lemma}\label{lemma3}
	Consider closed-loop system (\ref{4}) with $\omega \ne 0$ and let $\alpha>0$, $\beta>0$ and $\mu> 1$ be given constants. Suppose that there exist $\tilde{P}_{\sigma(t)} = Q^{1/2}P_{\sigma(t)}Q^{1/2}> 0$, $C^1$ functions $V_{\sigma(t)}:\mathbb{R}^n \rightarrow \mathbb{R}$, $\sigma(t)\in \mathcal{S}$, and two class $\mathcal{K}_\infty$ functions $\kappa_1$ and $\kappa_2$ such that
	\begin{align}
		\label{24}
		&\kappa_1(\left\|x\right\|)\le V_{\sigma(t)}(t)\le\kappa_2(\left\|x\right\|),\\
		\label{25}
		&{\dot V_{\sigma(t)}}(t) \le \left\{
		\begin{array}{l}
			\beta {V_{\sigma(t)}}(t) + \gamma^2\omega^T(t)\omega(t),\quad t \in T_\uparrow\left(t_k,t_{k+1}\right),\\[0.3em]
			-\alpha {V_{\sigma(t)}}(t) + \gamma^2\omega^T(t)\omega(t),\; t \in T_\downarrow\left(t_k,t_{k+1}\right),
		\end{array}\right. \\
		\label{26}
		&{V_{\sigma(t_k)}}\left({t_k} + \tau \right) \le \mu {V_{\sigma(t_{k-1})}}\left( {{{\left( {{t_k} + \tau } \right)}^ - }} \right),\\
		\label{27}
		&{\lambda _2}{c_1}e^{-\alpha T}+{\gamma^2}d < {c_2}{\lambda _1}{e^{-\left( {\ln \mu  + \alpha {\tau _d} + \beta {\tau _d}} \right){N_0}}}.
	\end{align}
	Then for any switching signal $\sigma$ satisfying
	\begin{align}\label{28}
		{\tau _a} > {\tau _a}^* = \frac{{T\left( {\ln \mu  + \alpha {\tau _d} + \beta {\tau _d}} \right)}}{{\ln \left( {{\lambda _1}{c_2}} \right) - \ln \left( {{\lambda _2}{c_1}e^{-\alpha T} + {\gamma ^2}d} \right) - \left( {\ln \mu  + \alpha {\tau _d} + \beta {\tau _d}} \right){N_0} }},
	\end{align}
	the system (\ref{4}) is finite-time bounded with respect to $\left(c_1, c_2, T, d, Q, \sigma\right)$, where $\lambda_1 = \lambda_{\min }(P_{\sigma (t)})$, $\lambda_2 = \lambda_{\max }(P_{\sigma (0)})$, $\tau_d \buildrel \Delta \over= \max T_\uparrow\left(t_k,t_{k+1}\right)$.
\end{lemma}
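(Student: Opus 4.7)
The plan is to mirror the proof of Lemma~\ref{lemma2}, treating the forcing term $\gamma^2\omega^T(t)\omega(t)$ in (\ref{25}) by a variation-of-constants argument on each sub-interval. With the same candidate $V_{\sigma(t)}(t)=x_a^T\tilde P_{\sigma(t)}x_a$, the envelope bounds $V_{\sigma(0)}(0)\le\lambda_2 c_1$ and $V_{\sigma(t)}(t)\ge\lambda_1 x_a^T(t)Qx_a(t)$ used in (\ref{14}) carry over verbatim. On a synchronous segment $[s,t]\subset T_\downarrow(t_k,t_{k+1})$, multiplying (\ref{25}) by the integrating factor $e^{\alpha r}$ and integrating from $s$ to $t$ gives
\[
V_{\sigma(t)}(t)\le e^{-\alpha(t-s)}V_{\sigma(s)}(s)+\gamma^2\int_s^t e^{-\alpha(t-r)}\omega^T(r)\omega(r)\,dr,
\]
and the analogous computation on an asynchronous segment yields the same inequality with $e^{\beta(t-s)}$ in place of $e^{-\alpha(t-s)}$.

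I would then iterate these two inequalities backwards through every switching instant on $(0,t)$, applying the jump condition (\ref{26}) to pick up a factor of $\mu$ at each switch. This is the same chaining that produces (\ref{12}), except that it now leaves behind both a homogeneous part and a convolution of the disturbance,
\[
V_{\sigma(t)}(t)\le \mu^{N_\sigma(0,t)}e^{-\alpha T_\downarrow(0,t)+\beta T_\uparrow(0,t)}V_{\sigma(0)}(0)+\gamma^2\int_0^t K(r,t)\,\omega^T(r)\omega(r)\,dr,
\]
where $K(r,t)=\mu^{N_\sigma(r,t)}e^{-\alpha T_\downarrow(r,t)+\beta T_\uparrow(r,t)}$ records only the switches that fall in $(r,t]$. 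Using $T_\uparrow\le\tau_d N_\sigma$, $T_\downarrow(0,t)=t-T_\uparrow(0,t)$ and the ADT bound $N_\sigma(0,t)\le N_0+T/\tau_a$, both the homogeneous factor and the kernel $K(r,t)$ can be majorised by $e^{(\ln\mu+(\alpha+\beta)\tau_d)(N_0+T/\tau_a)}$, with an extra $e^{-\alpha T}$ surviving on the homogeneous side at the worst case $t=T$.

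Pulling the kernel bound out of the integral and invoking $\int_0^T\omega^T(r)\omega(r)\,dr\le d$ together with $V_{\sigma(0)}(0)\le\lambda_2 c_1$ produces
\[
V_{\sigma(t)}(t)\le e^{(\ln\mu+(\alpha+\beta)\tau_d)(N_0+T/\tau_a)}\bigl[\lambda_2 c_1 e^{-\alpha T}+\gamma^2 d\bigr].
\]
From (\ref{28}) one reads off $e^{(\ln\mu+(\alpha+\beta)\tau_d)T/\tau_a}<\lambda_1 c_2\,e^{-(\ln\mu+(\alpha+\beta)\tau_d)N_0}/(\lambda_2 c_1 e^{-\alpha T}+\gamma^2 d)$; substituting this into the previous display collapses its right-hand side to $\lambda_1 c_2$, and together with $V_{\sigma(t)}(t)\ge\lambda_1 x_a^T Q x_a$ this delivers $x_a^T(t)Qx_a(t)<c_2$. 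Assumption (\ref{27}) is exactly what keeps the denominator in (\ref{28}) positive so that $\tau_a^*$ is finite; without it the ADT condition would be vacuous.

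The main obstacle is the careful propagation of the convolution kernel $K(r,t)$ through the sequence of switches. The number of switches and the synchronous/asynchronous occupancy inside $(r,t]$ depend on the variable base point $r$, and a naive attempt to retain $e^{-\alpha T_\downarrow(r,t)}$ inside the convolution produces an $r$-dependent exponent that does not integrate cleanly against $\omega^T\omega$. The trick is to discard this decay only inside the integral (via $e^{-\alpha T_\downarrow(r,t)}\le 1$) while keeping it on the homogeneous term, so that assumption (\ref{27}) is precisely what forces the final estimate below $\lambda_1 c_2$.
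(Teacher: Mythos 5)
Your proposal follows essentially the same route as the paper's proof: the same Lyapunov candidate, the same integrating-factor bounds on synchronous and asynchronous segments (the paper's (\ref{32})--(\ref{33})), the same backward iteration through switches with the jump factor $\mu$, and the same majorisation of the disturbance kernel by $e^{(\ln\mu+(\alpha+\beta)\tau_d)(N_0+T/\tau_a)}$ before invoking $\int_0^T\omega^T\omega\,dt\le d$ and the ADT bound (\ref{28}). The handling of the convolution term that you single out as the main obstacle is exactly how the paper resolves it in (\ref{34}), so the argument is correct and matches the published proof.
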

\begin{proof}
	For the system (4), the Lyapunov function is constructed as
	\begin{align}
		{V_{\sigma (t)}}(t) = x_a^T{\tilde P_{\sigma (t)}}{x_a},\nonumber
	\end{align}
	From (\ref{25}), we have
	\begin{align}
		\label{30}
		&\frac{d}{{dt}}\left( {{e^{ - \beta t}}{V_{\sigma(t)}}(t)} \right) \le {\gamma ^2}{e^{ - \beta t}}{\omega ^T}(t)\omega (t),\\
		\label{31}
		&\frac{d}{{dt}}\left( {{e^{\alpha t}}{V_{\sigma(t)}}(t)} \right) \le {\gamma ^2}{e^{\alpha t}}{\omega ^T}(t)\omega (t).
	\end{align}
	Integrating (\ref{30}) and (\ref{31}) for $t \in \left[t_k,t_{k+1}\right)$ gives
	\begin{align}
		\label{32}
		{V_{\sigma(t)}}(t) &< {e^{\beta T_\uparrow\left(t_k,t\right)}}{V_{\sigma(t_{k-1})}}({t_k}) + {\gamma 	^2}\int_{{t_k}}^t {e^{\beta (t-s)}} {{\omega ^T}(s)} \omega (s)ds ,\\
		\label{33}
		{V_{\sigma(t)}}(t) &< {e^{ - \alpha T_\downarrow\left(t_k,t\right)}}{V_{\sigma(t_k)}}({t_k} + \tau )+ {\gamma ^2}\int_{{{t_k} + \tau}}^t {e^{-\alpha (t-s)}} 	{{\omega ^T}(s)} \omega (s)ds.
	\end{align}
	Combine (\ref{26}), (\ref{32}) and (\ref{33}), for any $t \in \left( {0,T} \right)$, using the iterative method,
	\begin{align}\label{34}
		{V_{\sigma (t)}}(t)
		\le& \mu {e^{ - \alpha {T_ \downarrow }\left( {{t_M},t} \right)}}{V_{\sigma ({t_{M - 1}})}}\left( {{t_M} + \tau } \right)+ {\gamma ^2}\int_{{t_M} + \tau }^t {{e^{ - \alpha (t - s)}}} {\omega ^T}(s)\omega (s)ds\nonumber\\
		\le& \mu {e^{ - \alpha {T_ \downarrow }\left( {{t_M},t} \right)}}{e^{\beta {T_ \uparrow }\left( {{t_M},t} \right)}}{V_{\sigma ({t_{M - 1}})}}({t_M})+ \mu {e^{ - \alpha {T_ \downarrow }\left( {{t_M},t} \right)}}{\gamma ^2}\int_{{t_M}}^{{t_M} + \tau } {{e^{\beta \left( {{t_M} + \tau  - s} \right)}}} {\omega ^T}(s)\omega (s)ds \nonumber\\
		&+ {\gamma ^2}\int_{{t_M} + \tau }^t {{e^{ - \alpha (t - s)}}} {\omega ^T}(s)\omega (s)ds\\
		\le&  \cdots \nonumber\\
		\le& {\mu ^{{N_\sigma }\left( {{t_0},t} \right)}}{e^{ - \alpha {T_ \downarrow }\left( {{t_0},t} \right)}}{e^{\beta {T_ \uparrow }\left( {{t_0},t} \right)}}{V_{\sigma (0)}}(0)+ {\gamma ^2}\int_0^t {{\mu ^{{N_\sigma }(s,t)}}{e^{ - \alpha {T_ \downarrow }\left( {s,t} \right)}}{e^{\beta {T_ \uparrow }\left( {s,t} \right)}}{\omega ^T}(s)\omega (s)ds} \nonumber\\
		\le& {\mu ^{{N_\sigma }\left( {{t_0},t} \right)}}{e^{ - \alpha T\left( {{t_0},t} \right)}}{e^{\left( {\alpha  + \beta } \right){T_ \uparrow }\left( {{t_0},t} \right)}}{V_{\sigma (0)}}(0)+ {\gamma ^2}\int_0^t {{\mu ^{{N_\sigma }(s,t)}}{e^{ - \alpha T\left( {s,t} \right)}}{e^{\left( {\alpha  + \beta } \right){T_ \uparrow }\left( {s,t} \right)}}{\omega ^T}(s)\omega (s)ds} \nonumber\\
		\le& {e^{\left( {\ln \mu  + \alpha {\tau _d} + \beta {\tau _d}} \right){N_0}}}{e^{\left( {\ln \mu  + \alpha {\tau _d} + \beta {\tau _d}} \right)T/{\tau _a}}}\times\left( {{e^{ - \alpha T}}{V_{\sigma (0)}}(0) + {\gamma ^2}d} \right),
	\end{align}
	then the following relationship is obtained
	\begin{align}
		\label{35}
		{V_{\sigma (t)}}(t) &= x_a^T(t){\tilde P_{\sigma (t)}}x_a(t) \ge {\lambda _{\min }}({P_{\sigma (t)}})x_a^T(t)Qx_a(t) = {\lambda _1}x_a^T(t)Qx_a(t),\\
		\label{36}
		{V_{\sigma (0)}}(0) &= x_a^T(0){\tilde P_{\sigma (0)}}x_a(0) \le {\lambda _{\max }}({P_{\sigma (0)}})x_a^T(0)Qx_a(0) = {\lambda _2}x_a^T(0)Qx_a(0) \le {\lambda _2}{c_1},
	\end{align}
	According to (\ref{34})–(\ref{36}), it is obtained that
	\begin{align}\label{37}
		x_a^T(t)Qx_a(t) \le& \frac{{{V_{\sigma (t)}}(t)}}{{{\lambda _1}}}\nonumber\\
		<& {e^{\left( {\ln \mu  + \alpha {\tau _d} + \beta {\tau _d}} \right){N_0}}} {e^{\left( {\ln \mu  + \alpha {\tau _d} + \beta {\tau _d}} \right)T/{\tau _a}}}\times \frac{{{\lambda _2}{c_1}e^{-\alpha T}+{{\gamma ^2}d}}}{{{\lambda _1}}}.
	\end{align}
	The following relationship is obtained
	\begin{align}
		\ln \left( {{\lambda _1}{c_2}} \right) &- \ln \left( {{\lambda _2}{c_1}e^{-\alpha T}+{{\gamma ^2}d}} \right)- \left( {\ln \mu  + \alpha {\tau _d} + \beta {\tau _d}} \right){N_0} > 0.\nonumber
	\end{align}
	By virtue of (\ref{28}), we know
	\begin{align}\label{39}
		\frac{{T}}{{{\tau _a}}} <& \frac{{\ln \left( {{\lambda _1}{c_2}} \right) - \ln \left( {{\lambda _2}{c_1}e^{-\alpha T}+{{\gamma ^2}d}} \right)}-{ \left( {\ln \mu  + \alpha {\tau _d} + \beta {\tau _d}} \right){N_0}}}{{\ln \mu  + \alpha {\tau _d} + \beta {\tau _d}}}.
	\end{align}
	Substituting (\ref{39}) into (\ref{37}) yields
	\begin{align}\label{40}
		{x_a}^T(t)Q{x_a}(t)
		<& \frac{{{\lambda _2}{c_1}e^{-\alpha T}+{{\gamma ^2}d}}}{{{\lambda _1}}}{e^{\left( {\ln \mu  + \alpha {\tau _d} + \beta {\tau _d}} \right){N_0}}}\times {e^{\left( {\ln \mu  + \alpha {\tau _d} + \beta {\tau _d}} \right)T/{\tau _a}}}\nonumber\\
		<& \frac{{\left({\lambda _2}{c_1}e^{-\alpha T}+{{\gamma ^2}d}\right){e^{\left( {\ln \mu  + \alpha {\tau _d} + \beta {\tau _d}} \right){N_0}}}}}{{{\lambda _1}}}\times \frac{{{\lambda _1}{c_2}}}{{\left({\lambda _2}{c_1}e^{-\alpha T}+{{\gamma ^2}d}\right){e^{\left( {\ln \mu  + \alpha {\tau _d} + \beta {\tau _d}} \right){N_0} }}}}\nonumber\\
		=& {c_2}.
	\end{align}	
	From (\ref{40}), it is convicned that system (\ref{4}) is finite-time bounded with respect to $\left(c_1, c_2, T, d, Q, \sigma\right)$.
\end{proof}

{When designing controllers, the Finite-Time Boundedness condition is instrumental in ensuring that, even in the worst-case scenarios, the system remains bounded within a specified duration. Typically, these conditions are integrated with considerations of long-term stability or asymptotic stability, indicating that the system performs well not just in the short term but also maintains stability in the long run. }
\section{Parameter optimization}
\label{optimization}
In this section, solvable conditions are proposed to obtain the optimal controller gain.

To simplify the controller design, it is crucial to deal with the coupling term in system (\ref{4}). Introduce the following notations:
\begin{align}
	{\tilde A_i} =& \left[ {\begin{array}{*{20}{c}}
			{{A_i}}&0&{{B_i}}\\
			0&0&0\\
			0&0&{ - {K_f}}
	\end{array}} \right],\quad
	{\tilde B_i} = \left[ {\begin{array}{*{20}{c}}
			0&{{B_i}}\\
			I&0\\
			0&0
	\end{array}} \right],\quad
	{\tilde C_i} = \left[\begin{array}{*{20}{c}}
		0&I&0\\
		{{C_i}}&0&0
	\end{array} \right],\quad
	{\tilde I} \;\;= \left[ \begin{array}{*{20}{c}}
		0&0\\
		0&0\\
		0&{ - I}
	\end{array} \right],\nonumber\\
	{\tilde D_i} =& \left[ {\begin{array}{*{20}{c}}
			{{D_i}}\\
			0\\
			0
	\end{array}} \right],\quad
	{\tilde E_i} = \left[ {\begin{array}{*{20}{c}}
			0\\
			{{E_i}}
	\end{array}} \right],\quad
	\;{\tilde K_f} = \left[\begin{array}{*{20}{c}}
		0&0\\
		I&0\\
		0&{{K_f}}
	\end{array} \right],\quad
	{K_{ci}} = \left[ {\begin{array}{*{20}{c}}
			{{A_{ci}}}&{{B_{ci}}}\\
			{{C_{ci}}}&{{D_{ci}}}
	\end{array}} \right],\quad
	{K_{cj}} = \left[ {\begin{array}{*{20}{c}}
			{{A_{cj}}}&{{B_{cj}}}\\
			{{C_{cj}}}&{{D_{cj}}}
	\end{array}} \right],\nonumber
\end{align}
the system matrices are denoted as
\begin{align}\label{5}
	{A_{aij}} &= {\tilde A_i} + {\tilde B_i}{K_{cj}}{\tilde C_i} + \tilde I{K_{cj}}{\tilde C_i}\left( {{{\tilde A}_i} + {{\tilde B}_i}{K_{cj}}{{\tilde C}_i}} \right),\nonumber\\
	{A_{ai}} &= {\tilde A_i} + {\tilde B_i}{K_{ci}}{\tilde C_i} + \tilde I{K_{ci}}{\tilde C_i}\left( {{{\tilde A}_i} + {{\tilde B}_i}{K_{ci}}{{\tilde C}_i}} \right),\nonumber\\
	{G_{aij}} &= {\tilde D_i} + {\tilde K_f}{K_{cj}}{\tilde E_i} + \tilde I{K_{cj}}{\tilde C_i}\left( {{{\tilde D}_i} + {{\tilde K}_f}{K_{cj}}{{\tilde E}_i}} \right),\nonumber\\
	{G_{ai}} &= {\tilde D_i} + {\tilde K_f}{K_{ci}}{\tilde E_i} + \tilde I{K_{ci}}{\tilde C_i}\left( {{{\tilde D}_i} + {{\tilde K}_f}{K_{ci}}{{\tilde E}_i}} \right).
\end{align}

Based on the finite-time stability analysis, the solvable conditions are investigated in \textit{Theorem \ref{thm1}} to optimize the dynamic output feedback controller of system (\ref{4}).
\begin{thm}\label{thm1}
	Consider the closed-loop system (\ref{4}) with $\omega =0$, for any $i, j \in \mathcal{S}$ with given constants $\alpha > 0,\beta >0,\mu >1,\varepsilon>0,\rho>0$. Suppose that there exist matrices $P_{i} > 0, P_{j} > 0, R_{ci}, R_{cj}, S_{ci}, S_{cj}$, $\tilde{P}_{i} = Q^{1/2}P_{i}Q^{1/2}$, $\tilde{P}_{j} = Q^{1/2}P_{j}Q^{1/2}$ such that
	\begin{align}
		\label{19}
		{\Phi _{ij}} &= \left[ {\begin{array}{*{20}{c}}
				{{\Phi _{ij}}(1,1)}&{{\Phi _{ij}}(1,2)}&{P_j}{{\tilde B}_i} - {{\tilde B}_i}{R_{cj}} + \varepsilon \tilde C_i^TS_{cj}^T&\tilde I{S_{cj}}{{\tilde C}_i}{{\tilde B}_i} - \tilde I{{\tilde C}_i}{{\tilde B}_i}{R_{cj}} + \rho \tilde C_i^TS_{cj}^T\\
				*&- \varepsilon {R_{cj}} - \varepsilon R_{cj}^T&0&\varepsilon {S_{cj}}{{\tilde C}_i}{{\tilde B}_i} - \varepsilon {{\tilde C}_i}{{\tilde B}_i}{R_{cj}}\\
				*&*&- \varepsilon {R_{cj}} - \varepsilon R_{cj}^T&0\\
				*&*&*&{- \rho {R_{cj}} - \rho R_{cj}^T}
		\end{array}} \right] \le 0,\\
		\label{20}
		{\Phi _i} &= \left[ {\begin{array}{*{20}{c}}
				{{\Phi _i}(1,1)}&{{\Phi _i}(1,2)}&{{P_i}{{\tilde B}_i} - {{\tilde B}_i}{R_{ci}} + \varepsilon \tilde C_i^TS_{ci}^T}&{\tilde I{S_{ci}}{{\tilde C}_i}{{\tilde B}_i} - \tilde I{{\tilde C}_i}{{\tilde B}_i}{R_{ci}} + \rho \tilde C_i^TS_{ci}^T}\\
				*&{ - \varepsilon {R_{ci}} - \varepsilon R_{ci}^T}&0&{\varepsilon {S_{ci}}{{\tilde C}_i}{{\tilde B}_i} - \varepsilon {{\tilde C}_i}{{\tilde B}_i}{R_{ci}}}\\
				*&*&{ - \varepsilon {R_{ci}} - \varepsilon R_{ci}^T}&0\\
				*&*&*&{ - \rho {R_{ci}} - \rho R_{ci}^T}
		\end{array}} \right] \le 0,\\
		\label{21}
		{P_i} &\le \mu {P_j},\\
		\label{22}
		{\lambda _2}&{c_1}{e^{\left( {\ln \mu  + \alpha {\tau _d} + \beta {\tau _d}} \right){N_0} - \alpha T}} < {\lambda _1}{c_2},
	\end{align}
	where
	\begin{align}
		{\Phi _{ij}\left(1,1\right)} =& \tilde A_i^T{{\tilde P}_j} + {{\tilde P}_j}{{\tilde A}_i} + \tilde C_i^TS_{cj}^T\tilde B_i^T + {{\tilde B}_i}{S_{cj}}{{\tilde C}_i} +\tilde A_i^T\tilde C_i^TS_{cj}^T{{\tilde I}^T} + \tilde I{S_{cj}}{{\tilde C}_i}{{\tilde A}_i} + \tilde C_i^TS_{cj}^T\tilde B_i^T\tilde C_i^T{{\tilde I}^T}\nonumber\\
		&+ \tilde I{{\tilde C}_i}{{\tilde B}_i}{S_{cj}}{{\tilde C}_i} - \beta {{\tilde P}_j},\nonumber\\
		{\Phi _{ij}\left(1,2\right)} =& {P_j}\tilde I - \tilde I{R_{cj}} + \varepsilon \tilde A_i^T\tilde C_i^TS_{cj}^T + \varepsilon \tilde C_i^TS_{cj}^T\tilde B_i^T\tilde C_i^T,\nonumber\\
		{\Phi _{i}\left(1,1\right)} =& \tilde A_i^T{{\tilde P}_i} + {{\tilde P}_i}{{\tilde A}_i} + \tilde C_i^TS_{ci}^T\tilde B_i^T + {{\tilde B}_i}{S_{ci}}{{\tilde C}_i} +\tilde A_i^T\tilde C_i^TS_{ci}^T{{\tilde I}^T} + \tilde I{S_{ci}}{{\tilde C}_i}{{\tilde A}_i} + \tilde C_i^TS_{ci}^T\tilde B_i^T\tilde C_i^T{{\tilde I}^T}\nonumber\\ 
		&+ \tilde I{{\tilde C}_i}{{\tilde B}_i}{S_{ci}}{{\tilde C}_i} + \alpha {{\tilde P}_i},\nonumber\\
		{\Phi _{i}\left(1,2\right)} =& {P_i}\tilde I - \tilde I{R_{ci}} + \varepsilon \tilde A_i^T\tilde C_i^TS_{ci}^T + \varepsilon \tilde C_i^TS_{ci}^T\tilde B_i^T\tilde C_i^T.\nonumber
	\end{align}
	Then for any switching signal $\sigma$ satisfying
	\begin{eqnarray}
		{\tau _a} > {\tau _a}^* = \frac{{T\left( {\ln \mu  + \alpha {\tau _d} + \beta {\tau _d}} \right)}}{{\ln \left( {{\lambda _1}{c_2}} \right) - \ln \left( {{\lambda _2}{c_1}} \right) + \alpha T - \left( {\ln \mu  + \alpha {\tau _d} + \beta {\tau _d}} \right){N_0} }},\nonumber
	\end{eqnarray}
	the system (\ref{4}) is finite-time stable with respect to $\left(c_1, c_2, T, Q, \sigma\right)$, where $\lambda_1 = \lambda_{\min }(P_{\sigma (t)})$, $\lambda_2 = \lambda_{\max }(P_{\sigma (0)})$. Meanwhile, the controller gains are given by $K_{ci} = R_{ci}^{-1}S_{ci}, K_{cj} = R_{cj}^{-1}S_{cj}$.
\end{thm}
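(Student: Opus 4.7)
The plan is to deduce Theorem \ref{thm1} from Lemma \ref{lemma2} by constructing the controller-mode-dependent Lyapunov function $V_{\sigma(t)}(t) = x_a^T \tilde{P}_{\sigma(t)} x_a$ and showing that (19)–(22) imply the three Lyapunov conditions (7)–(9) of Lemma \ref{lemma2}. The finite-time bound (9) is literally (22), and the jump condition (8) follows directly from (21): multiplying $P_i \le \mu P_j$ by $Q^{1/2}$ on both sides yields $\tilde{P}_i \le \mu \tilde{P}_j$, whence $V_{\sigma(t_k)}((t_k+\tau)) \le \mu V_{\sigma(t_{k-1})}((t_k+\tau)^-)$. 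The bulk of the argument therefore reduces to establishing the differential inequalities (7), i.e. $A_{ai}^T\tilde{P}_i + \tilde{P}_i A_{ai} + \alpha \tilde{P}_i \le 0$ on synchronous intervals and $A_{aij}^T\tilde{P}_j + \tilde{P}_j A_{aij} - \beta \tilde{P}_j \le 0$ on asynchronous intervals.

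The main obstacle is that, after substituting the factored form (5) of $A_{aij}$, the Lyapunov inequality becomes quadratic in $K_{cj}$: the block $\tilde{I} K_{cj} \tilde{C}_i (\tilde{A}_i + \tilde{B}_i K_{cj} \tilde{C}_i)$ contributes both linear terms (handled by the change of variable $S_{cj} = R_{cj} K_{cj}$) and the genuine bilinearity $\tilde{I} K_{cj} \tilde{C}_i \tilde{B}_i K_{cj} \tilde{C}_i$. My strategy is to apply Lemma \ref{lemma1} iteratively to decouple these nonlinearities. I would first isolate the term $\tilde{P}_j \tilde{I} K_{cj} \tilde{C}_i (\tilde{A}_i + \tilde{B}_i K_{cj} \tilde{C}_i)$ and its transpose, writing it in the template $X^T Y + Y^T X$ with $Y$ carrying the $K_{cj}$ factor; invoking Lemma \ref{lemma1} with scalar $\varepsilon$ and slack matrix $R_{cj}$ introduces the column $P_j \tilde{I} - \tilde{I} R_{cj} + \varepsilon \tilde{A}_i^T \tilde{C}_i^T S_{cj}^T + \varepsilon \tilde{C}_i^T S_{cj}^T \tilde{B}_i^T \tilde{C}_i^T$ and the diagonal penalty $-\varepsilon(R_{cj} + R_{cj}^T)$. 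A second application with the same $\varepsilon$ handles the surviving $\tilde{P}_j \tilde{B}_i K_{cj} \tilde{C}_i$ channel and the coupling $\varepsilon S_{cj} \tilde{C}_i \tilde{B}_i - \varepsilon \tilde{C}_i \tilde{B}_i R_{cj}$ that links these two slack blocks; a third application with scalar $\rho$ removes the remaining $\tilde{I} K_{cj} \tilde{C}_i \tilde{B}_i K_{cj} \tilde{C}_i$ bilinearity, producing the fourth diagonal block $-\rho(R_{cj} + R_{cj}^T)$. After these three Schur-type expansions, the resulting inequality is exactly the $4\times 4$ block $\Phi_{ij}$ in (19). The synchronous case (20) is identical with $j$ replaced by $i$ and $-\beta \tilde{P}_j$ replaced by $+\alpha \tilde{P}_i$.

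Once (7) is secured, the remainder mimics the proof of Lemma \ref{lemma2}: from $\tilde{P}_{\sigma(t)} = Q^{1/2} P_{\sigma(t)} Q^{1/2}$ we obtain $\lambda_1 x_a^T Q x_a \le V_{\sigma(t)}(t)$ and $V_{\sigma(0)}(0) \le \lambda_2 c_1$, and the ADT bound $\tau_a > \tau_a^*$ together with (22) propagates these bounds to $x_a^T(t) Q x_a(t) < c_2$ on $[0,T]$, establishing finite-time stability with respect to $(c_1, c_2, T, Q, \sigma)$. Invertibility of $R_{ci}$ and $R_{cj}$, needed to recover $K_{ci} = R_{ci}^{-1} S_{ci}$ and $K_{cj} = R_{cj}^{-1} S_{cj}$, is automatic: the diagonal blocks $-\varepsilon(R_{c\cdot} + R_{c\cdot}^T)$ and $-\rho(R_{c\cdot} + R_{c\cdot}^T)$ appearing in (19)–(20) are strictly negative, which forces $R_{c\cdot} + R_{c\cdot}^T > 0$ and hence nonsingularity. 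The trickiest bookkeeping is matching the exact off-diagonal pattern produced by the three successive Lemma \ref{lemma1} expansions to the columns 2–4 of $\Phi_{ij}$; I would verify this by reversing the process (Schur complementing out the three slack blocks in sequence) to recover the original Lyapunov matrix inequality, which provides a clean check that no cross term has been dropped.
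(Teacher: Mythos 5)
Your proposal is correct and follows essentially the same route as the paper's Appendix A proof: the controller-mode-dependent Lyapunov function $V_{\sigma(t)}=x_a^T\tilde P_{\sigma(t)}x_a$, the change of variables $S_{c\cdot}=R_{c\cdot}K_{c\cdot}$, and repeated use of Lemma \ref{lemma1} to peel the slack blocks off $\Phi_{ij}$ and $\Phi_i$ and recover the differential inequalities (\ref{7}), after which Lemma \ref{lemma2} concludes. The only difference is bookkeeping — the paper applies Lemma \ref{lemma1} twice (handling the two $\varepsilon$-blocks in one stacked step) where you apply it three times — and your observation that the $-\varepsilon(R_{c\cdot}+R_{c\cdot}^T)$ blocks force invertibility of $R_{c\cdot}$ is a worthwhile point the paper leaves implicit.
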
 
\begin{proof}
	Please see the Appendix \ref{proof of thm1}.
\end{proof}

Subsequently, the following solvable inqualities are provided to optimize the controller for the finite-time boundedness of system (\ref{4}) with disturbance.

\begin{thm}\label{thm2}
	Consider the closed-loop system (\ref{4}) with $\omega \neq 0$, for any $i, j \in \mathcal{S}$, with given constants $\alpha > 0,\beta >0,\mu >1,\varepsilon>0,\rho>0$. Suppose that there exist matrices $P_{i} > 0, P_{j} > 0, R_{ci}, R_{cj}, S_{ci}, S_{cj}$, $\tilde{P}_{i} = Q^{1/2}P_{i}Q^{1/2}$, $\tilde{P}_{j} = Q^{1/2}P_{j}Q^{1/2}$ such that
	\begin{align}
		\label{41}
		{\Psi _{ij}} &= \left[ {\begin{array}{*{20}{c}}
				{{\Psi _{ij}}\left( {1,1} \right)}&{{\Psi _{ij}}\left( {1,2} \right)}&{{\Psi _{ij}}\left( {1,3} \right)}&{{\Psi _{ij}}\left( {1,4} \right)}&{{\Psi _{ij}}\left( {1,5} \right)}&{{\Psi _{ij}}\left( {1,6} \right)}&{{\Psi _{ij}}\left( {1,7} \right)}\\
				*&{{\Psi _{ij}}\left( {2,2} \right)}&{{\Psi _{ij}}\left( {2,3} \right)}&0&{{\Psi _{ij}}\left( {2,5} \right)}&0&{{\Psi _{ij}}\left( {2,7} \right)}\\
				*&*&{{\Psi _{ij}}\left( {3,3} \right)}&0&0&{{\Psi _{ij}}\left( {3,6} \right)}&{{\Psi _{ij}}\left( {3,7} \right)}\\
				*&*&*&{{\Psi _{ij}}\left( {4,4} \right)}&0&0&0\\
				*&*&*&*&{{\Psi _{ij}}\left( {5,5} \right)}&0&0\\
				*&*&*&*&*&{{\Psi _{ij}}\left( {6,6} \right)}&0\\
				*&*&*&*&*&*&{{\Psi _{ij}}\left( {7,7} \right)}
		\end{array}} \right] \le 0,\\
		\label{42}
		{\Psi _i} &= \left[ {\begin{array}{*{20}{c}}
				{{\Psi _i}\left( {1,1} \right)}&{{\Psi _i}\left( {1,2} \right)}&{{\Psi _i}\left( {1,3} \right)}&{{\Psi _i}\left( {1,4} \right)}&{{\Psi _i}\left( {1,5} \right)}&{{\Psi _i}\left( {1,6} \right)}&{{\Psi _i}\left( {1,7} \right)}\\
				*&{{\Psi _i}\left( {2,2} \right)}&{{\Psi _i}\left( {2,3} \right)}&0&{{\Psi _i}\left( {2,5} \right)}&0&{{\Psi _i}\left( {2,7} \right)}\\
				*&*&{{\Psi _i}\left( {3,3} \right)}&0&0&{{\Psi _i}\left( {3,6} \right)}&{{\Psi _i}\left( {3,7} \right)}\\
				*&*&*&{{\Psi _i}\left( {4,4} \right)}&0&0&0\\
				*&*&*&*&{{\Psi _i}\left( {5,5} \right)}&0&0\\
				*&*&*&*&*&{{\Psi _i}\left( {6,6} \right)}&0\\
				*&*&*&*&*&*&{{\Psi _i}\left( {7,7} \right)}
		\end{array}} \right] \le 0,\\
		\label{43}
		{P_i} &\le \mu {P_j},\\
		\label{21_2}
		{\lambda _2}&{c_1}e^{-\alpha T}+{\gamma^2}d < {c_2}{\lambda _1}{e^{-\left( {\ln \mu  + \alpha {\tau _d} + \beta {\tau _d}} \right){N_0}}},
	\end{align}
	where
	\begin{align}
		{\Psi _{ij}\left(1,1\right)} =& \tilde A_i^T{{\tilde P}_j} + {{\tilde P}_j}{{\tilde A}_i} + C_i^T{C_i} + \tilde C_i^TS_{cj}^T\tilde B_i^T + {{\tilde B}_i}{S_{cj}}{{\tilde C}_i} + \tilde A_i^T\tilde C_i^TS_{cj}^T{{\tilde I}^T}+ \tilde I{S_{cj}}{{\tilde C}_i}{{\tilde A}_i} + \tilde C_i^TS_{cj}^T\tilde B_i^T\tilde C_i^T{{\tilde I}^T} \nonumber\\
		&+ \tilde I{{\tilde C}_i}{{\tilde B}_i}{S_{cj}}{{\tilde C}_i} - \beta {{\tilde P}_j},\nonumber\\
		{\Psi _{ij}\left(1,2\right)} =& {{\tilde P}_j}{{\tilde D}_i} + {{\tilde K}_f}{S_{cj}}{{\tilde E}_i} + \tilde I{S_{cj}}{{\tilde C}_i}{{\tilde D}_i} + \tilde I{{\tilde C}_i}{{\tilde K}_f}{S_{cj}}{{\tilde E}_i} + C_i^T{E_i},\nonumber\\
		{\Psi _{ij}\left(1,3\right)} =& {P_j}\tilde I - \tilde I{R_{cj}} + \varepsilon \tilde A_i^T\tilde C_i^TS_{cj}^T + \varepsilon \tilde C_i^TS_{cj}^T\tilde B_i^T\tilde C_i^T,\quad
		{\Psi _{ij}\left(1,4\right)} = {P_j}{{\tilde B}_i} - {{\tilde B}_i}{R_{cj}} + \varepsilon \tilde C_i^TS_{cj}^T,\nonumber\\
		{\Psi _{ij}\left(1,5\right)} =& {P_j}{{\tilde K}_f} - {{\tilde K}_f}{R_{cj}},\quad
		{\Psi _{ij}\left(1,6\right)} = \tilde I{S_{cj}}{{\tilde C}_i}{{\tilde B}_i} - \tilde I{{\tilde C}_i}{{\tilde B}_i}{R_{cj}} + \rho \tilde C_i^TS_{cj}^T,\nonumber\\
		{\Psi _{ij}\left(1,7\right)} =& \tilde I{S_{cj}}{{\tilde C}_i}{{\tilde K}_f} - \tilde I{{\tilde C}_i}{{\tilde K}_f}{R_{cj}},\quad
		{\Psi _{ij}\left(2,2\right)} = E_i^T{E_i} - {\gamma ^2},\quad
		{\Psi _{ij}\left(2,3\right)} = \varepsilon \tilde D_i^T\tilde C_i^TS_{cj}^T + \varepsilon \tilde E_i^TS_{cj}^T\tilde K_f^T\tilde C_i^T,\nonumber\\
		{\Psi _{ij}\left(2,5\right)} =& \varepsilon \tilde E_i^TS_{cj}^T,\quad
		{\Psi _{ij}\left(2,7\right)} = \rho \tilde E_i^TS_{cj}^T,\quad
		{\Psi _{ij}\left(3,3\right)} = {\Psi _{ij}\left(4,4\right)} ={\Psi _{ij}\left(5,5\right)} = - \varepsilon {R_{cj}} - \varepsilon R_{cj}^T,\nonumber\\
		{\Psi _{ij}\left(3,6\right)} =& \varepsilon {S_{cj}}{{\tilde C}_i}{{\tilde B}_i} - \varepsilon {{\tilde C}_i}{{\tilde B}_i}{R_{cj}},\quad
		{\Psi _{ij}\left(3,7\right)} = \varepsilon {S_{cj}}{{\tilde C}_i}{{\tilde K}_f} - \varepsilon {{\tilde C}_i}{{\tilde K}_f}{R_{cj}},\nonumber\\
		{\Psi _{ij}\left(6,6\right)} =& {\Psi _{ij}\left(7,7\right)} = - \rho {R_{cj}} - \rho R_{cj}^T,\nonumber\\
		{\Psi _{i}\left(1,1\right)} =& \tilde A_i^T{{\tilde P}_i} + {{\tilde P}_i}{{\tilde A}_i} + C_i^T{C_i} + \tilde C_i^TS_{ci}^T\tilde B_i^T + {{\tilde B}_i}{S_{ci}}{{\tilde C}_i} + \tilde A_i^T\tilde C_i^TS_{ci}^T{{\tilde I}^T}+ \tilde I{S_{ci}}{{\tilde C}_i}{{\tilde A}_i} + \tilde C_i^TS_{ci}^T\tilde B_i^T\tilde C_i^T{{\tilde I}^T} \nonumber\\
		&+ \tilde I{{\tilde C}_i}{{\tilde B}_i}{S_{ci}}{{\tilde C}_i} + \alpha {{\tilde P}_i},\nonumber\\
		{\Psi _{i}\left(1,2\right)} =& {{\tilde P}_i}{{\tilde D}_i} + {{\tilde K}_f}{S_{ci}}{{\tilde E}_i} + \tilde I{S_{ci}}{{\tilde C}_i}{{\tilde D}_i} + \tilde I{{\tilde C}_i}{{\tilde K}_f}{S_{ci}}{{\tilde E}_i} + C_i^T{E_i},\nonumber\\
		{\Psi _{i}\left(1,3\right)} = &{P_i}\tilde I - \tilde I{R_{ci}} + \varepsilon \tilde A_i^T\tilde C_i^TS_{ci}^T + \varepsilon \tilde C_i^TS_{ci}^T\tilde B_i^T\tilde C_i^T,\quad
		{\Psi _{i}\left(1,4\right)} = {P_i}{{\tilde B}_i} - {{\tilde B}_i}{R_{ci}} + \varepsilon \tilde C_i^TS_{ci}^T,\nonumber\\
		{\Psi _{i}\left(1,5\right)} =& {P_i}{{\tilde K}_f} - {{\tilde K}_f}{R_{ci}},\quad
		{\Psi _{i}\left(1,6\right)} = \tilde I{S_{ci}}{{\tilde C}_i}{{\tilde B}_i} - \tilde I{{\tilde C}_i}{{\tilde B}_i}{R_{ci}} + \rho \tilde C_i^TS_{ci}^T,\nonumber\\
		{\Psi _{i}\left(1,7\right)} =& \tilde I{S_{ci}}{{\tilde C}_i}{{\tilde K}_f} - \tilde I{{\tilde C}_i}{{\tilde K}_f}{R_{ci}},\quad
		{\Psi _{i}\left(2,2\right)} = E_i^T{E_i} - {\gamma ^2},\quad
		{\Psi _{i}\left(2,3\right)} = \varepsilon \tilde D_i^T\tilde C_i^TS_{ci}^T + \varepsilon \tilde E_i^TS_{ci}^T\tilde K_f^T\tilde C_i^T,\nonumber\\
		{\Psi _{i}\left(2,5\right)} =& \varepsilon \tilde E_i^TS_{ci}^T,\quad
		{\Psi _{i}\left(2,7\right)} = \rho \tilde E_i^TS_{ci}^T,\quad
		{\Psi _{i}\left(3,3\right)} = {\Psi _{i}\left(4,4\right)} = {\Psi _{i}\left(5,5\right)} = - \varepsilon {R_{ci}} - \varepsilon R_{ci}^T,\nonumber\\
		{\Psi _{i}\left(3,6\right)} =& \varepsilon {S_{ci}}{{\tilde C}_i}{{\tilde B}_i} - \varepsilon {{\tilde C}_i}{{\tilde B}_i}{R_{ci}},\quad
		{\Psi _{i}\left(3,7\right)} = \varepsilon {S_{ci}}{{\tilde C}_i}{{\tilde K}_f} - \varepsilon {{\tilde C}_i}{{\tilde K}_f}{R_{ci}},\nonumber\\
		{\Psi _{i}\left(6,6\right)} =& {\Psi _{i}\left(7,7\right)} = - \rho {R_{ci}} - \rho R_{ci}^T,\nonumber
	\end{align}
	then for any switching signal $\sigma$ satisfying (\ref{28}),
	the system (\ref{4}) is finite-time bounded with respect to $(c_1, c_2,$ $ T, d, Q, \sigma)$, where $\lambda_1 = \lambda_{\min }(P_{\sigma (t)})$, $\lambda_2 = \lambda_{\max }(P_{\sigma (0)})$. Meanwhile, the controller gains are given by $K_{ci} = R_{ci}^{-1}S_{ci}, K_{cj} = R_{cj}^{-1}S_{cj}$.
\end{thm}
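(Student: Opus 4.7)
The plan is to apply Lemma~\ref{lemma3} to the closed-loop system (\ref{4}) using the controller-mode-dependent Lyapunov function $V_{\sigma(t)}(t)=x_a^T(t)\tilde P_{\sigma(t)}x_a(t)$, and to show that the LMIs (\ref{41})--(\ref{42}), together with (\ref{43}) and (\ref{21_2}), imply the four hypotheses (\ref{24})--(\ref{27}) of Lemma~\ref{lemma3} while simultaneously delivering the $H_\infty$-type inequality $\dot V + y^Ty - \gamma^2\omega^T\omega \le \beta V$ on asynchronous intervals and $\dot V + y^Ty - \gamma^2\omega^T\omega \le -\alpha V$ on synchronous intervals.

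Condition (\ref{24}) is immediate from $\tilde P_{\sigma(t)}>0$, and conditions (\ref{26})--(\ref{27}) follow from the assumed inequalities (\ref{43}) and (\ref{21_2}) after recognizing that $x_a^T Q x_a \le V/\lambda_1$ and $V(0)\le\lambda_2 c_1$ exactly as in the proof of Lemma~\ref{lemma3}. The substantive work is therefore the derivative inequality (\ref{25}). I would differentiate $V$ along (\ref{4}), substitute the decomposition (\ref{5}) of $A_{aij}$, $A_{ai}$, $G_{aij}$, $G_{ai}$, append the performance cost $y^Ty - \gamma^2\omega^T\omega$ with $y=C_ix+E_i\omega$, and gather the resulting quadratic form in $\xi=[x_a^T,\omega^T]^T$. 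The $(1,1)$ block of the assembled matrix then coincides with $\Psi_{ij}(1,1)$ (resp.\ $\Psi_i(1,1)$) up to the $-\beta\tilde P_j$ (resp.\ $+\alpha\tilde P_i$) shift, and the $(1,2)$ and $(2,2)$ blocks match $\Psi(1,2)$ and $\Psi(2,2)$ after noting that $C_i^TC_i$ and $C_i^TE_i$ absorb the $H_\infty$ cost.

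The hard part will be handling the bilinearity in the decomposition (\ref{5}), specifically the quadratic-in-$K_{cj}$ term $\tilde I K_{cj}\tilde C_i\tilde B_i K_{cj}\tilde C_i$ and the cross products $\tilde P_j\tilde B_i K_{cj}\tilde C_i$ and $\tilde I K_{cj}\tilde C_i\tilde A_i$, which couple the unknowns $\tilde P_j$ and $K_{cj}$. I would decouple these by invoking Lemma~\ref{lemma1} twice: first with slack scalar $\varepsilon$ and slack matrix $R_{cj}$ under the change of variable $S_{cj}=R_{cj}K_{cj}$, which lifts the $\tilde B_i K_{cj}$ and $\tilde K_f K_{cj}$ factors into the rows indexed by $3,4,5$ at the cost of $-\varepsilon(R_{cj}+R_{cj}^T)$ diagonal penalties; then a second time with scalar $\rho$ to absorb the quadratic $K_{cj}\tilde C_i\tilde B_i K_{cj}$ blocks into rows $6,7$ with $-\rho(R_{cj}+R_{cj}^T)$ penalties. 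A careful bookkeeping then matches each off-diagonal entry $\Psi_{ij}(1,3)$--$\Psi_{ij}(3,7)$ to a specific $\tilde B_i K_{cj}$, $\tilde K_f K_{cj}$, or $\tilde I K_{cj}\tilde C_i\tilde B_i K_{cj}$ factor appearing in $A_{aij}$ and $G_{aij}$; the synchronous case $\Psi_i$ is identical modulo the sign of the dwell rate.

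Once the assembled inequality $\dot V + y^Ty - \gamma^2\omega^T\omega \le \beta V$ (resp.\ $\le -\alpha V$) is obtained, dropping the nonnegative term $y^Ty$ gives (\ref{25}), so Lemma~\ref{lemma3} delivers finite-time boundedness with respect to $(c_1,c_2,T,d,Q,\sigma)$ under the ADT bound (\ref{28}). The invertibility of $R_{ci},R_{cj}$ needed for the reconstruction $K_{ci}=R_{ci}^{-1}S_{ci}$, $K_{cj}=R_{cj}^{-1}S_{cj}$ is inherited from the negative-definiteness of the $(3,3)$ diagonal blocks, which force $R_{cj}+R_{cj}^T>0$ and hence the nonsingularity of $R_{cj}$.
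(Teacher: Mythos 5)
Your proposal matches the paper's own proof in all essentials: the same controller-mode-dependent Lyapunov function, the same augmentation with $\Gamma(t)=y^Ty-\gamma^2\omega^T\omega$ followed by dropping $y^Ty$ to recover (\ref{25}), the same two-stage application of Lemma~\ref{lemma1} with slack pairs $(\varepsilon,R)$ and $(\rho,R)$ under the substitution $S=RK$, and the same final appeal to Lemma~\ref{lemma3}. You describe the decoupling in the synthesis direction while the paper peels the $\rho$-rows and then the $\varepsilon$-rows off the LMI in the analysis direction, but these are the same argument read in opposite directions, so no further comment is needed.
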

\begin{proof}
	Please see the Appendix \ref{proof of thm2}.
\end{proof}

{Furthermore, under zero initial condition, the $H_\infty$ performance can be obtained in the following criterion, ensuring robustness in the resistance of the closed-loop system (\ref{4}) to external disturbances.}
\begin{thm}\label{thm3}
	Consider the closed-loop system (\ref{4}) with $\omega \neq 0$, for any $i, j \in \mathcal{S}$ with given constants $\alpha > 0,\beta >0,\mu >1,\varepsilon>0,\rho>0$. Suppose that there exist matrices $P_{i} > 0, P_{j} > 0, R_{ci}, R_{cj}, S_{ci}, S_{cj}$, $\tilde{P}_{i} = Q^{1/2}P_{i}Q^{1/2}$, $\tilde{P}_{j} = Q^{1/2}P_{j}Q^{1/2}$ such that
	\begin{align}
		\label{44}
		&\qquad\qquad{\Psi _{ij}} \le 0,\\
		\label{45}
		&\qquad\qquad{\Psi _{i}}  \le 0,\\
		\label{46}
		&\qquad\qquad{P_i} < \mu {P_j},\\
		\label{47}
		&{\gamma^2}d < {c_2}{\lambda _1}{e^{ \left( {\ln \mu  + \alpha {\tau _d} + \beta {\tau _d}} \right){N_0}}}.
	\end{align}
	Then for any switching signal $\sigma$ satisfying
	\begin{align}
		{\tau _a} > {\tau _a}^* 
		= \max \left\{ {\frac{{T\left( {\ln \mu  + \alpha {\tau _d} + \beta {\tau _d}} \right)}}{{\ln \left( {{\lambda _1}{c_2}} \right) - \ln \left( {{\gamma ^2}d} \right) - \left( {\ln \mu  + \alpha {\tau _d} + \beta {\tau _d}} \right){N_0}}}},{\frac{{\left( {\alpha  + \beta } \right){\tau _d} + \ln \mu }}{\alpha }} \right\},\nonumber
	\end{align}
	the closed-loop system (\ref{4}) is finite-time bounded with $H_\infty$ performance $\gamma_s = e^{{N_0}\left[ {\left( {\alpha  + \beta } \right){\tau _d} + \ln \mu } \right]/2+\alpha T/2}\gamma$ with respect to $\left(0, c_2, T, d, Q, \sigma\right)$, where $\lambda_1 = \lambda_{\min }(P_{\sigma (t)})$, $\lambda_2 = \lambda_{\max }(P_{\sigma (0)})$. Meanwhile, the controller gains are given by $K_{ci} = R_{ci}^{-1}S_{ci}, K_{cj} = R_{cj}^{-1}S_{cj}$.
\end{thm}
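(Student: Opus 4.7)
The plan is to reuse the controller-mode-dependent Lyapunov function $V_{\sigma(t)}(t)=x_a^T(t)\tilde P_{\sigma(t)}x_a(t)$, extract from the LMIs (\ref{44})--(\ref{45}) a strengthened version of inequality (\ref{25}) of Lemma \ref{lemma3} that now carries the output penalty $-y^Ty$ (these LMIs already embed $C_i^TC_i$ in block $(1,1)$ and $E_i^TE_i-\gamma^2$ in block $(2,2)$), and then split the argument into two parts matching the two arguments of the maximum in the prescribed ADT bound. The first branch, together with (\ref{47}) specialized to $c_1=0$, will yield finite-time boundedness through Theorem \ref{thm2}; the second branch, under zero initial condition, will yield the claimed $H_\infty$ level.

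The first step is to apply Lemma \ref{lemma1} and a Schur complement to (\ref{44})--(\ref{45}) exactly as in the appendix proof of Theorem \ref{thm2}, this time retaining the $C_i^TC_i$ and $E_i^TE_i-\gamma^2$ contributions so that one reaches
\[
\dot V_{\sigma(t)}(t)\le\beta V_{\sigma(t)}(t)-y^T(t)y(t)+\gamma^2\omega^T(t)\omega(t),\quad t\in T_\uparrow(t_k,t_{k+1}),
\]
\[
\dot V_{\sigma(t)}(t)\le-\alpha V_{\sigma(t)}(t)-y^T(t)y(t)+\gamma^2\omega^T(t)\omega(t),\quad t\in T_\downarrow(t_k,t_{k+1}),
\]
together with the jump estimate $V_{\sigma(t_k)}(t_k+\tau)\le\mu V_{\sigma(t_{k-1})}((t_k+\tau)^-)$ read from (\ref{46}). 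Discarding the nonpositive $-y^Ty$ term reduces these bounds to the hypotheses (\ref{25})--(\ref{26}) of Lemma \ref{lemma3}; combined with (\ref{47}) at $c_1=0$ and the first argument of the maximum, Theorem \ref{thm2} immediately produces finite-time boundedness with respect to $(0,c_2,T,d,Q,\sigma)$.

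For the $H_\infty$ estimate, the $-y^Ty$ term is retained and the iteration of equations (\ref{30})--(\ref{34}) in the proof of Lemma \ref{lemma3} is replayed. Under $x_a(0)=0$, and hence $V_{\sigma(0)}(0)=0$, the iteration yields
\[
V_{\sigma(T)}(T)+\int_0^T\Lambda(s,T)\,y^T(s)y(s)\,ds\le\gamma^2\int_0^T\Lambda(s,T)\,\omega^T(s)\omega(s)\,ds,
\]
with $\Lambda(s,T)=\mu^{N_\sigma(s,T)}e^{-\alpha T_\downarrow(s,T)+\beta T_\uparrow(s,T)}$. Using $T_\uparrow(s,T)\le\tau_d N_\sigma(s,T)$ and $N_\sigma(s,T)\le N_0+(T-s)/\tau_a$, the second branch $\tau_a>[(\alpha+\beta)\tau_d+\ln\mu]/\alpha$ of the maximum supplies the upper bound $\Lambda(s,T)\le e^{N_0[(\alpha+\beta)\tau_d+\ln\mu]}$, while $\mu\ge 1$, $T_\uparrow\ge 0$ and $T_\downarrow\le T$ supply the lower bound $\Lambda(s,T)\ge e^{-\alpha T}$. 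Dropping the nonnegative $V_{\sigma(T)}(T)$ and combining the two bounds delivers
\[
\int_0^T y^T(s)y(s)\,ds\le\gamma^2 e^{\alpha T+N_0[(\alpha+\beta)\tau_d+\ln\mu]}\int_0^T\omega^T(s)\omega(s)\,ds,
\]
which is precisely $\gamma_s^2\int_0^T\omega^T\omega\,ds$ with $\gamma_s=e^{N_0[(\alpha+\beta)\tau_d+\ln\mu]/2+\alpha T/2}\gamma$.

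The main technical delicacy I anticipate is the careful bookkeeping of the factor $\Lambda(s,T)$ across alternating asynchronous and synchronous intervals: the growth $e^{\beta T_\uparrow}$ accumulated on the asynchronous segments must be dominated by the decay $e^{-\alpha T_\downarrow}$ together with the jump factor $\mu^{N_\sigma}$, which is exactly why the second argument of the maximum is needed. One must also verify that the same $\Lambda$ admits both the uniform lower bound $e^{-\alpha T}$ used on the output side and the uniform upper bound $e^{N_0[(\alpha+\beta)\tau_d+\ln\mu]}$ used on the disturbance side \emph{without} leaving a residual $T/\tau_a$ factor; this cancellation is the non-routine part, and it is precisely the reason the ADT lower bound is expressed as a maximum rather than a single formula.
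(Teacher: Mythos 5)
Your proposal is correct and follows essentially the same route as the paper's proof: the same dissipation inequalities with the $-y^Ty$ term retained, the same iterative integration under zero initial condition, and the same use of the second branch of the ADT maximum to absorb the $T/\tau_a$ growth before bounding the weight $\Lambda(s,T)$ above by $e^{N_0[(\alpha+\beta)\tau_d+\ln\mu]}$ and below by $e^{-\alpha T}$. The only (harmless) difference is that you make the finite-time-boundedness half explicit by invoking Theorem \ref{thm2} with $c_1=0$, whereas the paper leaves that step implicit and only derives the disturbance-attenuation bound.
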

\begin{proof}
	Please see the Appendix \ref{proof of thm3}.
\end{proof}

\begin{figure}[ht]
	\centering
	\includegraphics[height=0.2\textwidth,width=0.5\hsize]{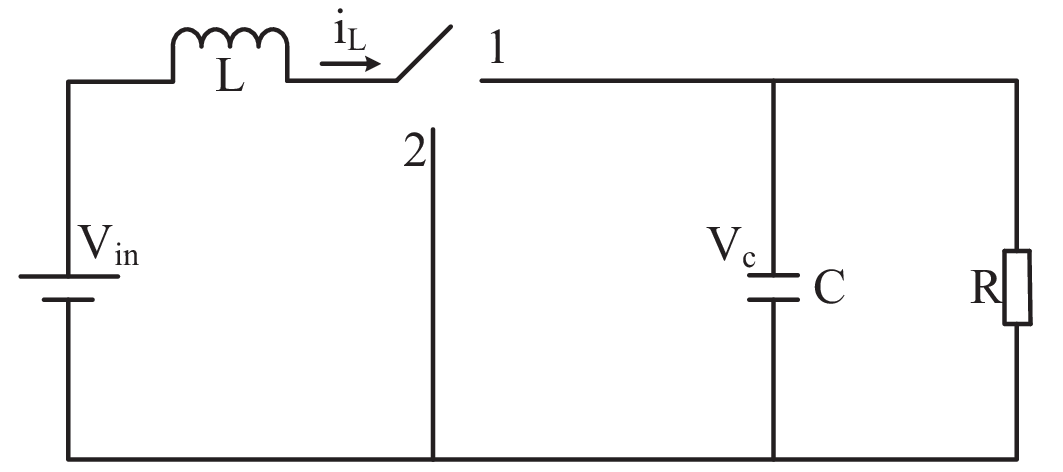}
	\caption{{The boost converter circuit system.}}\label{fig1}
\end{figure}

\section{Illustrative Example}
\label{example}
In this section, a boost converter circuit \cite{ren2018event} is considered to demonstrate the effectiveness of the method proposed in this paper. As shown in Fig.\ref{fig1}, the system parameters are given by
\begin{align}
	{A_1} =& \left[ {\begin{array}{*{20}{c}}
			{0}&{-\frac{1}{L}}\\[0.3em]
			{\frac{1}{C}}&{ -\frac{1}{RC}}
	\end{array}} \right],
	{B_1} = \left[ {\begin{array}{*{20}{c}}
			{\frac{1}{L}}\\[0.3em]
			{0}
	\end{array}} \right],
	{C_1} = \left[ \begin{array}{*{20}{c}}
		-1.5&1
	\end{array} \right],\nonumber\\
	{A_2} =& \left[\begin{array}{*{20}{c}}
		{0}&{0}\\[0.3em]
		{0}&{-\frac{1}{RC}}
	\end{array} \right],
	{B_2} = \left[ {\begin{array}{*{20}{c}}
			{\frac{1}{L}}\\[0.3em]
			{0}
	\end{array}} \right],
	{C_2} = \left[ {\begin{array}{*{20}{c}}
			-1.7&-0.1
	\end{array}} \right].\nonumber
\end{align}	
Let the system state variable $x = \left[i_L,V_c\right]^T$ and the control input $u = V_{in}$. Suppose that the other system matrices are
\begin{align}
	{D_1} =& \left[ {\begin{array}{*{20}{c}}
			{0.3}\\[0.3em]
			{0.1}
	\end{array}} \right],
	E_1 = 0.1,
	{D_2} = \left[ {\begin{array}{*{20}{c}}
			{0.4}\\[0.3em]
			{0.2}
	\end{array}} \right],
	E_2 = 0.2.\nonumber
\end{align}

The circuit parameters are set as $L = 1mH$, $C = 1mF$ and $R = 1 \Omega$, and set $K_f = 10$, $\tau_d = 0.1$, $N_0 = 1$. To verify the \textit{Theorem \ref{thm1}}, \textit{\ref{thm2}} and \textit{\ref{thm3}} separately, there are two cases for discussion.

\begin{case}[$\omega \equiv 0$]\label{case1}
	Set the parameters $\alpha = 0.4$, $\beta = 0.1$, $\mu = 1.1$, $\varepsilon = 1$, $\rho = 1$. The values of $c_1,c_2,T$ and matrix $Q$ are given by $c_1 = 1$, $c_2 = 1.1$, $T = 10$, $Q = I$, and the ADT switching signal satisfies $\tau_a > \tau_a^* = 0.8093$. 
	By utilizing the controller gains obtained by \textit{Theorem \ref{thm1}},  Fig.\ref{fig2} presents the control signals for original switching control and bumpless transfer control, as well as the corresponding switching signals for both system and controller. The state trajectory of the resulting closed-loop system is plotted in Fig.\ref{fig3} with the initial value $x(0)=\left[0.8,0.5\right]^T$. {The results derived from the theorem show that the initial energy range of the system's state is \( c_1 < 1 \). Within the first 10 seconds after the system starts, the energy of the system's state remains below 1.1. This implies that the system's state does not exceed the set limit within the finite time of 10 seconds, ensuring that issues like excessive current leading to circuit failure are unlikely to occur within the upcoming 10 seconds. As illustrated in Fig.\ref{fig2}, the original controller output is bumpy, with numerous abrupt changes, which can be harmful to circuit components. However, after implementing bumpless transfer control, the controller output becomes smooth, preventing oscillations in the circuit and, consequently, avoiding breakdown and burnout of the circuit components. The phase space diagram of the system state in Fig.\ref{fig3} shows that the system stays within the limited range from the initial point throughout the finite time. In this case, the average dwell time \( \tau_a \) is 0.8093, implying that the system can switch approximately 12 times in the finite duration of 10 seconds.}
	\begin{figure}[h]
		\centering
		\includegraphics[height=0.32\textwidth,width=0.55\hsize]{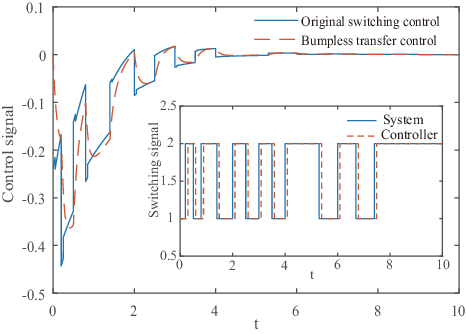}
		\caption{{Controller output and switching signals.}}\label{fig2}
	\end{figure}
	\begin{figure}[h]
		\centering
		\includegraphics[height=0.32\textwidth,width=0.55\hsize]{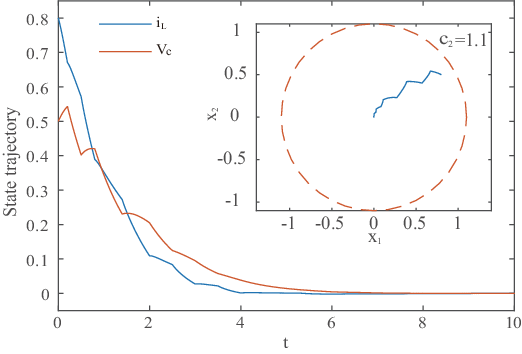}
		\caption{{The state trajectories of system (\ref{1}}).}\label{fig3}
	\end{figure}
\end{case}
\begin{case}[$\omega \ne 0$]\label{case2}
	Set the parameters $\alpha = 0.5$, $\beta = 1.2$, $\mu = 1.1$, $\varepsilon = 1$, $\rho = 1$. The external disturbance input is taken as $\omega(t) = \cos(t)/\left(t^2+1\right)$. The values of $c_1,c_2,T,d$ and matrix $R$ are given by $c_1 = 1$, $c_2 = 1.1$, $T = 20$, $Q = I$, $d = 0.3$, and the ADT switching signal satisfies $\tau_a^* = 3.7661$.
	By utilizing the controller gains obtained by \textit{Theorem \ref{thm2}} and \textit{\ref{3}}, Fig.\ref{fig4} shows the control signal for original switching control and bumpless transfer control, as well as the corresponding switching signals for system and controller. The state trajectory of the resulting closed-loop system is plotted in Fig.\ref{fig5} with the initial value $x(0)=\left[0.8,0.5\right]^T$. {Similar to the analysis above, within the first 20 seconds after the system starts, the energy of the system's state remains below 1.1, indicating that the state does not exceed the limit within the finite time of 20 seconds. Fig.\ref{fig4} shows how bumpless transfer control smooths the controller output signal, preventing oscillations in the circuit. In this scenario, the system experiences bounded-energy external disturbances, resulting in the state trajectory in Fig.\ref{fig5} not quickly stabilizing. However, the phase space diagram still demonstrates that the system state remains within the restricted range. The average dwell time \( \tau_a \) in this case is 3.7661, meaning the system can switch approximately 5 times within the finite duration of 20 seconds.}
	
	\begin{figure}[ht]
		\centering
		\includegraphics[height=0.32\textwidth,width=0.55\hsize]{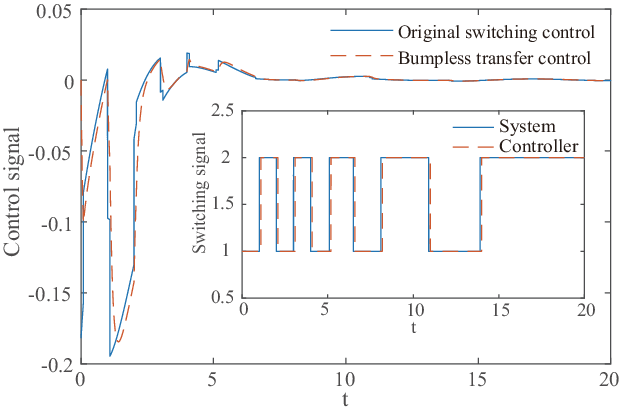}
		\caption{{Controller output with ${H}_\infty$ performance and switching signals.}}\label{fig4}
	\end{figure}
	\begin{figure}[ht]
		\centering
		\includegraphics[height=0.32\textwidth,width=0.55\hsize]{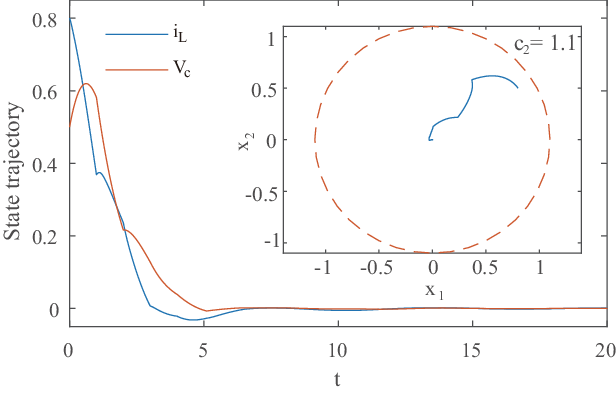}
		\caption{{The state trajectories of system (\ref{1}}) with external disturbance.}\label{fig5}
	\end{figure}
\end{case}

\section{Conclusions}
\label{conclusions}
This paper proposes the finite-time dynamic output bumpless transfer control for asynchronously switched system. Compared with previous researches, the novel criteria of this work not only restrict the bumps of switched system at switching instants, but also further guarantee the boundedness of the system state in finite time. First, a practical filter is adopted to address the bumpless transfer control, and a class of dynamic output feedback controllers are constructed to stabilize the switched systems. Then, a class of novel criteria are addressed to analyze the finite-time stability, and further the finite-time boundedness with $H_\infty$ performance under external disturbance. Solvable inequalities are proposed to optimize the controllers. Finally, the numerical example on the boost converter circuit system verifies the merits of proposed method. {Future work could focus on exploring switching scenarios in more complex or highly uncertain environments, applying the concepts developed in this paper to study the disturbance rejection capabilities of other robust control methods. Furthermore, investigating how dynamic output feedback can be integrated with other control strategies, such as predictive control or adaptive control, could potentially enhance the overall performance and adaptability of the system. This integrated approach may lead to more sophisticated control algorithms that are better suited for handling the intricacies and uncertainties inherent in complex dynamic systems.}
\appendix
\section{Proof of theorem \ref{thm1}}\label{proof of thm1}
\begin{proof}
	For the system (\ref{4}), a Lyapunov function is constructed as
	\begin{align}
		{V_{\sigma (t)}}(t) = x_a^T{\tilde P_{\sigma (t)}}{x_a},\nonumber
	\end{align}
	where
	\begin{align}
		\sigma (t) = \left\{ \begin{array}{l}
			j,\quad t \in T_\uparrow\left[ {{t_k},{t_{k+1}}} \right),k \in {\mathbb{N}^ + },\\[0.3em]
			i,\quad t \in T_\downarrow\left[ {{t_k},{t_{k+1}}} \right),k \in {\mathbb{N}^ + }.
		\end{array} \right.\nonumber
	\end{align}
	
	For any $t \in T_\uparrow\left[ {{t_k},{t_{k+1}}} \right),k \in {\mathbb{N}^ + }$, along the trajectories of (\ref{5}), we have
	\begin{align}
		{{\dot V}_j}(t) = x_a^T{\Sigma  _j}{x_a},\nonumber
	\end{align}
	where
	\begin{align}
		{\Sigma _j} =& \tilde A_i^T{{\tilde P}_j} + \tilde C_i^TK_{cj}^T\tilde B_i^T{{\tilde P}_j}+ {{\tilde P}_j}{{\tilde A}_i} + {{\tilde P}_j}{{\tilde B}_i}{K_{cj}}{{\tilde C}_i}+ {{\tilde P}_j}\tilde I{K_{cj}}{{\tilde C}_i}\left( {{{\tilde A}_i} + {{\tilde B}_i}{K_{cj}}{{\tilde C}_i}} \right)+ {\left( {{{\tilde A}_i} + {{\tilde B}_i}{K_{cj}}{{\tilde C}_i}} \right)^T}\tilde C_i^TK_{cj}^T{{\tilde I}^T}{{\tilde P}_j},\nonumber
	\end{align}
	which implies
	\begin{align}
		{{\dot V}_j}(t) - \beta {V_j}(t)= x_a^T\left( {{\Sigma _j} - \beta {{\tilde P}_j}} \right){x_a},\nonumber
	\end{align}
	Denote
	\begin{align}
		&X_1 = \left[\begin{array}{*{20}{c}}
			{R_{cj}^{ - 1}{S_{cj}}{{\tilde C}_i}}&0&0
		\end{array} \right],\nonumber\\
		&{Y_1} \;= \left[ \begin{array}{*{20}{c}}
			\tilde B_i^T\tilde C_i^TS_{cj}^T{{\tilde I}^T} - R_{cj}^T\tilde B_i^T\tilde C_i^T{{\tilde I}^T}&\varepsilon \tilde B_i^T\tilde C_i^TS_{cj}^T - \varepsilon R_{cj}^T\tilde B_i^T\tilde C_i^T&0
		\end{array} \right],\nonumber
	\end{align}
	one can derive from (\ref{19}) and \textit{lemma \ref{lemma1}} that
	\begin{align}
		{Z_1} = {W_1} + {X_1^T}Y_1 + {Y_1^T}X_1 < 0,\nonumber
	\end{align}
	where
	\begin{align}
		&{W_1} = \left[ \setlength{\arraycolsep}{0.03\linewidth}{\begin{array}{*{20}{c}}
				{W_{1}\left(1,1\right)}&{W_{1}\left(1,2\right)}&{P_j}{{\tilde B}_i} - {{\tilde B}_i}{R_{cj}} + \varepsilon \tilde C_i^TS_{cj}^T\\
				*&- \varepsilon {R_{cj}} - \varepsilon R_{cj}^T&0\\
				*&*&- \varepsilon {R_{cj}} - \varepsilon R_{cj}^T
		\end{array}} \right],\nonumber\\
		W_{1}&\left(1,1\right) = \tilde A_i^T{{\tilde P}_j} + {{\tilde P}_j}{{\tilde A}_i} + \tilde C_i^TS_{cj}^T\tilde B_i^T + {{\tilde B}_i}{S_{cj}}{{\tilde C}_i} + \tilde A_i^T\tilde C_i^TS_{cj}^T{{\tilde I}^T}+ \tilde I{S_{cj}}{{\tilde C}_i}{{\tilde A}_i} + \tilde C_i^TS_{cj}^T\tilde B_i^T\tilde C_i^T{{\tilde I}^T} \nonumber\\ 
		&\qquad\;\;+ \tilde I{{\tilde C}_i}{{\tilde B}_i}{S_{cj}}{{\tilde C}_i} - \beta {{\tilde P}_j},\nonumber\\
		W_{1}&\left(1,2\right) = {P_j}\tilde I - \tilde I{R_{cj}} + \varepsilon \tilde A_i^T\tilde C_i^TS_{cj}^T + \varepsilon \tilde C_i^TS_{cj}^T\tilde B_i^T\tilde C_i^T,\nonumber
	\end{align}
	which is equivalent to
	\begin{align}\label{52}
		&{Z_1} = \left[ {\begin{array}{*{20}{c}}
				{Z_1\left(1,1\right)}&{Z_1\left(1,2\right)}&{P_j}{{\tilde B}_i} - {{\tilde B}_i}{R_{cj}} + \varepsilon \tilde C_i^TS_{cj}^T\\
				* &- \varepsilon {R_{cj}} - \varepsilon R_{cj}^T&0\\
				* & * &- \varepsilon {R_{cj}} - \varepsilon R_{cj}^T
		\end{array}} \right] < 0,\\
		Z_1&\left(1,1\right) = \tilde A_i^T{{\tilde P}_j} + {{\tilde P}_j}{{\tilde A}_i} + \tilde C_i^TS_{cj}^T\tilde B_i^T + {{\tilde B}_i}{S_{cj}}{{\tilde C}_i}+\tilde A_i^T\tilde C_i^TS_{cj}^T{{\tilde I}^T} + \tilde C_i^TK_{cj}^T\tilde B_i^T\tilde C_i^TS_{cj}^T{{\tilde I}^T}\nonumber\\
		&\qquad\quad\;\; + \tilde I{S_{cj}}{{\tilde C}_i}{{\tilde A}_i}+ \tilde I{S_{cj}}{{\tilde C}_i}{{\tilde B}_i}{K_{cj}}{{\tilde C}_i} - \beta {{\tilde P}_j},\nonumber\\
		Z_1&\left(1,2\right) = {P_j}\tilde I - \tilde I{R_{cj}} + \varepsilon \tilde A_i^T\tilde C_i^TS_{cj}^T + \varepsilon \tilde C_i^TK_{cj}^T\tilde B_i^T\tilde C_i^TS_{cj}^T,\nonumber
	\end{align}
	then denote
	\begin{align}
		{X_2} &= \left[ \begin{array}{*{20}{c}}
			{R_{cj}^{ - 1}{S_{cj}}{{\tilde C}_i}\left( {{{\tilde A}_i} + {{\tilde B}_i}{K_{cj}}{{\tilde C}_i}} \right)}\\
			{{S_{cj}}{{\tilde C}_i}}
		\end{array} \right],\nonumber\\
		{Y_2} &= \left[ \begin{array}{*{20}{c}}
			{{{\tilde I}^T}{P_j} - R_{cj}^T{{\tilde I}^T}}\\
			{\tilde B_i^T{P_j} - R_{cj}^T\tilde B_i^T}
		\end{array} \right],\nonumber
	\end{align}
	one can derive from (\ref{52}) that
	\begin{align}
		{W_2} + {X_2^T}Y_2 + {Y_2^T}X_2 < 0,\nonumber
	\end{align}
	where
	\begin{align}
		{W_2} =&\tilde A_i^T{\tilde P_j} + {\tilde P_j}{\tilde A_i} + \tilde C_i^TS_{cj}^T\tilde B_i^T + {\tilde B_i}{S_{cj}}{\tilde C_i} + \left( {\tilde A_i^T + \tilde C_i^TK_{cj}^T\tilde B_i^T} \right)\tilde C_i^TS_{cj}^T{\tilde I^T} + \tilde I{S_{cj}}{\tilde C_i}\left( {{{\tilde A}_i} + {{\tilde B}_i}{K_{cj}}{{\tilde C}_i}} \right) - \beta {\tilde P_j},\nonumber
	\end{align}
	which is equivalent to
	\begin{align}\label{53}
		{{\Sigma _j} - \beta {{\tilde P}_j}} < 0,
	\end{align}
	according to (\ref{53}), we know that
	\begin{align}
		{\dot V_j}(t) < \beta {V_j}(t).\nonumber
	\end{align}
	For any $t \in \left[ {{t_k} + \tau \left( {{t_k}} \right),{t_{k + 1}}}\right),k \in {\mathbb{N}^ + }$, we have
	\begin{align}
		{{\dot V}_i}(t) = x_a^T{\Sigma _i}{x_a},\nonumber
	\end{align}
	where
	\begin{align}
		{\Sigma _i} =& \tilde A_i^T{{\tilde P}_i} + \tilde C_i^TK_{ci}^T\tilde B_i^T{{\tilde P}_i} + {{\tilde P}_i}{{\tilde A}_i} + {{\tilde P}_i}{{\tilde B}_i}{K_{ci}}{{\tilde C}_i} + {{\tilde P}_i}\tilde I{K_{ci}}{{\tilde C}_i}\left( {{{\tilde A}_i} + {{\tilde B}_i}{K_{ci}}{{\tilde C}_i}} \right) + {\left( {{{\tilde A}_i} + {{\tilde B}_i}{K_{ci}}{{\tilde C}_i}} \right)^T}\tilde C_i^TK_{ci}^T{{\tilde I}^T}{{\tilde P}_i},\nonumber
	\end{align}
	which implies
	\begin{align}
		{{\dot V}_i}(t) + \alpha {V_i}(t) = x_a^T\left( {{\Sigma _i} + \alpha {{\tilde P}_i}} \right){x_a},\nonumber
	\end{align}
	By using the same technique as above, from (\ref{20}), it is obtained that
	\begin{align}
		{\dot V_i}(t) <  - \alpha {V_i}(t).\nonumber
	\end{align}
	In addition, (\ref{21}) guarantees (\ref{8}). According to \textit{lemma \ref{lemma2}},
	it is convinced that system (\ref{4}) is finite-time stable with respect to $\left(c_1, c_2, T, Q, \sigma\right)$.
\end{proof}
\section{Proof of theorem \ref{thm2}}\label{proof of thm2}
\begin{proof}
	Consider a Lyapunov function as
	\begin{align}
		{V_{\sigma (t)}}(t) = x_a^T{\tilde P_{\sigma (t)}}{x_a},\nonumber
	\end{align}
	where
	\begin{align}
		\sigma (t) = \left\{ \begin{array}{l}
			j,\quad t \in T_\uparrow\left[ {{t_k},{t_{k+1}}} \right),k \in {\mathbb{N}^ + },\\[0.3em]
			i,\quad t \in T_\downarrow\left[ {{t_k},{t_{k+1}}} \right),k \in {\mathbb{N}^ + }.
		\end{array} \right.\nonumber
	\end{align}
	Denote that
	\begin{align}
		{\varsigma ^T} = \left[ {\begin{array}{*{20}{c}}
				{x_a^T}&{{\omega ^T}}
		\end{array}} \right],\nonumber
	\end{align}
	for any $t \in \left[ {{t_k},{t_k} + \tau \left( {{t_k}} \right)} \right),k \in {\mathbb{N}^ + }$, along the trajectories of (\ref{41}),
	\begin{align}
		{{\dot V}_j}(t) = {\varsigma ^T}{\Pi _j}\varsigma
		= {\varsigma ^T}\left[ {\begin{array}{*{20}{c}}
				{{\Pi _{j}\left(1,1\right)}}&{{\Pi _{j}\left(1,2\right)}}\\
				*&{0}
		\end{array}} \right]\varsigma,\nonumber
	\end{align}
	where
	\begin{align}
		{\Pi _{j}\left(1,1\right)}&=\tilde A_i^T{{\tilde P}_j} + \tilde C_i^TK_{cj}^T\tilde B_i^T{{\tilde P}_j} + {{\tilde P}_j}{{\tilde A}_i} + {{\tilde P}_j}{{\tilde B}_i}{K_{cj}}{{\tilde C}_i} + {\left( {{{\tilde A}_i} + {{\tilde B}_i}{K_{cj}}{{\tilde C}_i}} \right)^T}\tilde C_i^TK_{cj}^T{{\tilde I}^T}{{\tilde P}_j}\nonumber\\
		&+ {{\tilde P}_j}\tilde I{K_{cj}}{{\tilde C}_i}\left( {{{\tilde A}_i} + {{\tilde B}_i}{K_{cj}}{{\tilde C}_i}} \right),\nonumber\\
		{\Pi _{j}\left(1,2\right)}&={{\tilde P}_j}{{\tilde D}_i} + {{\tilde P}_j}{{\tilde K}_f}{K_{cj}}{{\tilde E}_i} + {{\tilde P}_j}\tilde I{K_{cj}}{{\tilde C}_i}\left( {{{\tilde D}_i} + {{\tilde K}_f}{K_{cj}}{{\tilde E}_i}} \right).\nonumber
	\end{align}
	According to Definition \ref{definition4}, denote that $\Gamma (t) \buildrel \Delta \over = {y^T}(t)y(t) - {\gamma ^2}{\omega ^T}(t)\omega (t)$, which implies
	\begin{align}
		{{\dot V}_j}(t) - \beta {V_j}(t) + \Gamma (t) &= {\varsigma ^T}{\Xi _j}\varsigma = {\varsigma ^T}\left[ {\begin{array}{*{20}{c}}
				{\Xi _{j}\left(1,1\right)}&{\Xi _{j}\left(1,2\right)}\\
				*&{E_i^T{E_i} - {\gamma ^2}}
		\end{array}} \right]\varsigma ,\nonumber
	\end{align}	
	where
	\begin{align}
		{\Xi _{j}\left(1,1\right)} =& \tilde A_i^T{{\tilde P}_j} + \tilde C_i^TK_{cj}^T\tilde B_i^T{{\tilde P}_j}+ {{\tilde P}_j}{{\tilde A}_i} + {{\tilde P}_j}{{\tilde B}_i}{K_{cj}}{{\tilde C}_i} + {\left( {{{\tilde A}_i} + {{\tilde B}_i}{K_{cj}}{{\tilde C}_i}} \right)^T}\tilde C_i^TK_{cj}^T{{\tilde I}^T}{{\tilde P}_j}  \nonumber\\
		& + {{\tilde P}_j}\tilde I{K_{cj}}{{\tilde C}_i}\left( {{{\tilde A}_i} + {{\tilde B}_i}{K_{cj}}{{\tilde C}_i}} \right) - \beta {{\tilde P}_j} + C_i^T{C_i},\nonumber\\
		{\Xi _{j}\left(1,2\right)} =& {{\tilde P}_j}\tilde I{K_{cj}}{{\tilde C}_i}\left( {{{\tilde D}_i} + {{\tilde K}_f}{K_{cj}}{{\tilde E}_i}} \right) + C_i^T{E_i} + {{\tilde P}_j}{{\tilde D}_i} + {{\tilde P}_j}{{\tilde K}_f}{K_{cj}}{{\tilde E}_i}.\nonumber
	\end{align}
	Let
	\begin{align}
		&X_3 = \left[ \begin{array}{*{20}{c}}
			{R_{cj}^{ - 1}{S_{cj}}{{\tilde C}_i}}&0&0&0&0\\
			0&{R_{cj}^{ - 1}{S_{cj}}{{\tilde E}_i}}&0&0&0
		\end{array} \right],\nonumber\\
		&{Y_3} \;= \left[\begin{array}{*{20}{c}}
			\tilde B_i^T\tilde C_i^TS_{cj}^T{{\tilde I}^T} - R_{cj}^T\tilde B_i^T\tilde C_i^T{{\tilde I}^T}&0&\tilde K_f^T\tilde C_i^TS_{cj}^T{{\tilde I}^T} - R_{cj}^T\tilde K_f^T\tilde C_i^T{{\tilde I}^T}&0&0\\
			\varepsilon \tilde B_i^T\tilde C_i^TS_{cj}^T - \varepsilon R_{cj}^T\tilde B_i^T\tilde C_i^T&0&\varepsilon \tilde K_f^T\tilde C_i^TS_{cj}^T - \varepsilon R_{cj}^T\tilde K_f^T\tilde C_i^T&0&0
		\end{array} \right],\nonumber
	\end{align}
	one can derive from (\ref{42}) and \textit{lemma \ref{lemma1}} that
	\begin{align}
		{Z_3} = {W_3} + {X_3^T}Y_3 + {Y_3^T}X_3 < 0,\nonumber
	\end{align}
	where
	\begin{align}
		&{W_3} = \left[ {\begin{array}{*{20}{c}}
				{{W_3}\left( {1,1} \right)}&{{W_3}\left( {1,2} \right)}&{{W_3}\left( {1,3} \right)}&{P_j}{{\tilde B}_i} - {{\tilde B}_i}{R_{cj}} + \varepsilon \tilde C_i^TS_{cj}^T&{{P_j}{{\tilde K}_f} - {{\tilde K}_f}{R_{cj}}}\\
				*&{E_i^T{E_i} - {\gamma ^2}}&{{W_3}\left( {2,3} \right)}&0&{\varepsilon \tilde E_i^TS_{cj}^T}\\
				*&*&{ - \varepsilon {R_{cj}} - \varepsilon R_{cj}^T}&0&0\\
				*&*&*&{ - \varepsilon {R_{cj}} - \varepsilon R_{cj}^T}&0\\
				*&*&*&*&{ - \varepsilon {R_{cj}} - \varepsilon R_{cj}^T}
		\end{array}} \right],\nonumber\\
		&{W_{3}\left(1,1\right)} = \tilde A_i^T{{\tilde P}_j} + {{\tilde P}_j}{{\tilde A}_i} + C_i^T{C_i} + \tilde C_i^TS_{cj}^T\tilde B_i^T + {{\tilde B}_i}{S_{cj}}{{\tilde C}_i} + \tilde A_i^T\tilde C_i^TS_{cj}^T{{\tilde I}^T}+ \tilde I{S_{cj}}{{\tilde C}_i}{{\tilde A}_i} + \tilde C_i^TS_{cj}^T\tilde B_i^T\tilde C_i^T{{\tilde I}^T} \nonumber\\
		&\;\;\qquad\quad\quad+ \tilde I{{\tilde C}_i}{{\tilde B}_i}{S_{cj}}{{\tilde C}_i} - \beta {{\tilde P}_j},\nonumber\\
		&{W_{3}\left(1,2\right)} = {{\tilde P}_j}{{\tilde D}_i} + {{\tilde K}_f}{S_{cj}}{{\tilde E}_i} + \tilde I{S_{cj}}{{\tilde C}_i}{{\tilde D}_i} + \tilde I{{\tilde C}_i}{{\tilde K}_f}{S_{cj}}{{\tilde E}_i} + C_i^T{E_i},\nonumber\\
		&{W_{3}\left(1,3\right)} = {P_j}\tilde I - \tilde I{R_{cj}} + \varepsilon \tilde A_i^T\tilde C_i^TS_{cj}^T + \varepsilon \tilde C_i^TS_{cj}^T\tilde B_i^T\tilde C_i^T,\quad
		{W_{3}\left(2,3\right)} = \varepsilon \tilde D_i^T\tilde C_i^TS_{cj}^T + \varepsilon \tilde E_i^TS_{cj}^T\tilde K_f^T\tilde C_i^T.\nonumber
	\end{align}
	which is equivalent to
	\begin{align}\label{61}
		{Z_3} = \left[ {\begin{array}{*{20}{c}}
				{{Z_3}\left( {1,1} \right)}&{{Z_3}\left( {1,2} \right)}&{{Z_3}\left( {1,3} \right)}&{{P_j}{{\tilde B}_i} - {{\tilde B}_i}{R_{cj}} + \varepsilon \tilde C_i^TS_{cj}^T}&{{P_j}{{\tilde K}_f} - {{\tilde K}_f}{R_{cj}}}\\
				*&{E_i^T{E_i} - {\gamma ^2}}&{{Z_3}\left( {2,3} \right)}&0&0\\
				*&*&{ - \varepsilon {R_{cj}} - \varepsilon R_{cj}^T}&0&0\\
				*&*&*&{ - \varepsilon {R_{cj}} - \varepsilon R_{cj}^T}&0\\
				*&*&*&*&{ - \varepsilon {R_{cj}} - \varepsilon R_{cj}^T}
		\end{array}} \right] < 0,
	\end{align}
	where
	\begin{align}
		{Z _{3}\left(1,1\right)} =& \tilde A_i^T{{\tilde P}_j} + {{\tilde P}_j}{{\tilde A}_i} + C_i^T{C_i} + \tilde C_i^TS_{cj}^T\tilde B_i^T + {{\tilde B}_i}{S_{cj}}{{\tilde C}_i} + \tilde A_i^T\tilde C_i^TS_{cj}^T{{\tilde I}^T}+ \tilde C_i^TK_{cj}^T\tilde B_i^T\tilde C_i^TS_{cj}^T{{\tilde I}^T} \nonumber\\
		&+ \tilde I{S_{cj}}{{\tilde C}_i}{{\tilde A}_i} + \tilde I{S_{cj}}{{\tilde C}_i}{{\tilde B}_i}{K_{cj}}{{\tilde C}_i} - \beta {{\tilde P}_j},\nonumber\\
		{Z _{3}\left(1,2\right)} =& {{\tilde P}_j}{{\tilde D}_i} + {{\tilde K}_f}{S_{cj}}{{\tilde E}_i} + \tilde I{S_{cj}}{{\tilde C}_i}{{\tilde D}_i} + \tilde I{S_{cj}}{{\tilde C}_i}{{\tilde K}_f}{K_{cj}}{{\tilde E}_i} + C_i^T{E_i},\nonumber\\
		{Z _{3}\left(1,3\right)} =& {P_j}\tilde I - \tilde I{R_{cj}} + \varepsilon \tilde A_i^T\tilde C_i^TS_{cj}^T + \varepsilon \tilde C_i^TK_{cj}^T\tilde B_i^T\tilde C_i^TS_{cj}^T,\nonumber\\
		{Z _{3}\left(2,3\right)} =& \varepsilon \tilde D_i^T\tilde C_i^TS_{cj}^T + \varepsilon \tilde E_i^TK_{cj}^T\tilde K_f^T\tilde C_i^TS_{cj}^T.\nonumber
	\end{align}
	Then denote that
	\begin{align}
		&{X_4} = \left[\begin{array}{*{20}{c}}
			R_{cj}^{ - 1}{{S_{cj}}{{\tilde C}_i}\left( {{{\tilde A}_i} + {{\tilde B}_i}{K_{cj}}{{\tilde C}_i}} \right)}&R_{cj}^{ - 1}{{S_{cj}}{{\tilde C}_i}\left( {{{\tilde D}_i} + {{\tilde K}_f}{K_{cj}}{{\tilde E}_i}} \right)}\\
			{{S_{cj}}{{\tilde C}_i}}&0\\
			0&{{S_{cj}}{{\tilde E}_i}}
		\end{array} \right],\nonumber\\
		&{Y_4} \;= \left[\begin{array}{*{20}{c}}
			{{{\tilde I}^T}{P_j} - R_{cj}^T{{\tilde I}^T}}&0\\
			{\tilde B_i^T{P_j} - R_{cj}^T\tilde B_i^T}&0\\
			{\tilde K_f^T{P_j} - R_{cj}^T\tilde K_f^T}&0
		\end{array} \right],\nonumber
	\end{align}
	one can derive from (\ref{61}) that
	\begin{align}
		{W_4} + {X_4^T}Y_4 + {Y_4^T}X_4 < 0,\nonumber
	\end{align}
	where
	\begin{align}
		{W_4} =& \left[ {\begin{array}{*{20}{c}}
				{W_4\left(1,1\right)}&{W_4\left(1,2\right)}\\[0.3em]
				*&{E_i^T{E_i} - {\gamma ^2}}
		\end{array}} \right],\nonumber\\
		{W_4\left(1,1\right)} =& \tilde A_i^T{{\tilde P}_j} + {{\tilde P}_j}{{\tilde A}_i} + C_i^T{C_i} + \tilde C_i^TS_{cj}^T\tilde B_i^T + {{\tilde B}_i}{S_{cj}}{{\tilde C}_i} + \left( {\tilde A_i^T + \tilde C_i^TK_{cj}^T\tilde B_i^T} \right)\tilde C_i^TS_{cj}^T{{\tilde I}^T} \nonumber\\
		&+ \tilde I{S_{cj}}{{\tilde C}_i}\left( {{{\tilde A}_i} + {{\tilde B}_i}{K_{cj}}{{\tilde C}_i}} \right) - \beta {{\tilde P}_j},\nonumber\\
		{W_4\left(1,2\right)} =& {{\tilde P}_j}{{\tilde D}_i} + {{\tilde K}_f}{S_{cj}}{{\tilde E}_i} + \tilde I{S_{cj}}{{\tilde C}_i}\left( {{{\tilde D}_i} + {{\tilde K}_f}{K_{cj}}{{\tilde E}_i}} \right) + C_i^T{E_i},\nonumber
	\end{align}
	which is equivalent to
	\begin{align}\label{62}
		{\Xi _j} < 0,
	\end{align}
	according to (\ref{62}), we know
	\begin{align}\label{63}
		{{\dot V}_j}(t) \le \beta {V_j}(t) - \Gamma (t),
	\end{align}
	then the following condition is derived
	\begin{align}
		{\dot V_j}(t) - \beta {V_j}(t) \le {\gamma ^2}{\omega ^T}(t)\omega (t).\nonumber
	\end{align}
	For any $t \in \left[ {{t_k} + \tau \left( {{t_k}} \right),{t_{k + 1}}}\right),k \in {\mathbb{N}^ + }$,
	\begin{align}
		{{\dot V}_i}(t) = {\varsigma ^T}{\Pi _i}\varsigma
		= {\varsigma ^T}\left[ {\begin{array}{*{20}{c}}
				{{\Pi  _{i}\left(1,1\right)}}&{{\Pi _{i}\left(1,2\right)}}\\
				*&0
		\end{array}} \right]\varsigma,\nonumber
	\end{align}
	where
	\begin{align}
		{\Pi  _{i}\left(1,1\right)}=&\tilde A_i^T{{\tilde P}_i} + \tilde C_i^TK_{ci}^T\tilde B_i^T{{\tilde P}_i}+ {{\tilde P}_i}{{\tilde A}_i} + {{\tilde P}_i}{{\tilde B}_i}{K_{ci}}{{\tilde C}_i} + {\left( {{{\tilde A}_i} + {{\tilde B}_i}{K_{ci}}{{\tilde C}_i}} \right)^T}\tilde C_i^TK_{ci}^T{{\tilde I}^T}{{\tilde P}_i}\nonumber\\
		& + {{\tilde P}_i}\tilde I{K_{ci}}{{\tilde C}_i}\left( {{{\tilde A}_i} + {{\tilde B}_i}{K_{ci}}{{\tilde C}_i}} \right),\nonumber\\
		{\Pi _{i}\left(1,2\right)}=&{{\tilde P}_i}{{\tilde D}_i} + {{\tilde P}_i}{{\tilde K}_f}{K_{ci}}{{\tilde E}_i} + {{\tilde P}_i}\tilde I{K_{ci}}{{\tilde C}_i}\left( {{{\tilde D}_i} + {{\tilde K}_f}{K_{ci}}{{\tilde E}_i}} \right).\nonumber
	\end{align}
	which implies
	\begin{align}
		{{\dot V}_i}(t) + \alpha {V_i}(t) + \Gamma (t) &= {\varsigma ^T}{\Xi _i}\varsigma  = {\varsigma ^T}\left[ {\begin{array}{*{20}{c}}
				{\Xi _{i}\left(1,1\right)}&{\Xi _{i}\left(1,2\right)}\\
				*&{E_i^T{E_i} - {\gamma ^2}}
		\end{array}} \right]\varsigma,\nonumber
	\end{align}	
	where
	\begin{align}
		{\Xi _{i}\left(1,1\right)} =& \tilde A_i^T{{\tilde P}_i} + \tilde C_i^TK_{ci}^T\tilde B_i^T{{\tilde P}_i} + {{\tilde P}_i}{{\tilde A}_i} + {{\tilde P}_i}{{\tilde B}_i}{K_{ci}}{{\tilde C}_i} + {{\tilde P}_i}\tilde I{K_{ci}}{{\tilde C}_i}\left( {{{\tilde A}_i} + {{\tilde B}_i}{K_{ci}}{{\tilde C}_i}} \right)
		\nonumber\\
		&+ {\left( {{{\tilde A}_i} + {{\tilde B}_i}{K_{ci}}{{\tilde C}_i}} \right)^T}\tilde C_i^TK_{ci}^T{{\tilde I}^T}{{\tilde P}_i} + \alpha {{\tilde P}_i} + C_i^T{C_i},\nonumber\\
		{\Xi _{i}\left(1,2\right)} =& {{\tilde P}_i}{{\tilde D}_i} + {{\tilde P}_i}{{\tilde K}_f}{K_{ci}}{{\tilde E}_i} + {{\tilde P}_i}\tilde I{K_{ci}}{{\tilde C}_i}\left( {{{\tilde D}_i} + {{\tilde K}_f}{K_{ci}}{{\tilde E}_i}} \right) + C_i^T{E_i}.\nonumber
	\end{align}
	By using the same technique as above, from (\ref{42}), we know
	\begin{align}\label{67}
		{\dot V_i}(t) <  - \alpha {V_i}(t) - \Gamma (t),
	\end{align}
	according to (\ref{67}), it is obtained that
	\begin{align}
		{{\dot V}_i}(t) < -\alpha {V_i}(t) + \gamma^2 \omega^T(t)\omega(t).\nonumber
	\end{align}
	In addition, (\ref{43}) guarantees (\ref{26}). According to \textit{lemma \ref{lemma3}},
	it is convinced that system (\ref{4}) is finite-time bounded with respect to $\left(c_1, c_2, T, d, Q, \sigma\right)$.
\end{proof}
\section{Proof of theorem \ref{thm3}}\label{proof of thm3}
\begin{proof}
	From Theorem \ref{thm2}, (\ref{44}) and (\ref{45}) guarantee (\ref{63}) and (\ref{67}), respectively. Combine (\ref{63}) and (\ref{67}),
	\begin{align}
		\label{69}
		&\frac{d}{{dt}}\left( {{e^{-\beta t}}{V_{\sigma(t)}}(t)} \right) \le -{e^{-\beta t}}\Gamma(t),\\
		\label{70}
		&\frac{d}{{dt}}\left( {{e^{\alpha t}}{V_{\sigma(t)}}(t)} \right) \le -{e^{\alpha t}}\Gamma(t).
	\end{align}
	Integrating (\ref{69}) and (\ref{70}) from $t_k$ to $t_{k+1}$ gives
	\begin{align}
		{V_{\sigma(t)}}(t) &< {e^{\beta T_\uparrow\left( {{t_k},t} \right)}}{V_{\sigma(t_{k-1})}}({t_k}) - \int_{{{t_k}}}^t {e^{\beta (t-s)}} \Gamma(s)ds,\nonumber\\
		{V_{\sigma(t)}}(t) &< {e^{ - \alpha T_\downarrow\left( t_k,t\right)}}{V_{\sigma(t_k)}}({t_k} + \tau) -\int_{{{t_k} + \tau}}^t {e^{-\alpha (t-s)}} \Gamma(s)ds.\nonumber
	\end{align}
	For any $i,j \in \mathcal{S}, i \ne j$
	\begin{align}
		{V_{\sigma(t_k)}}\left({t_k} + \tau \right) \le \mu {V_{\sigma(t_{k-1})}}\left( {{{\left( {{t_k} + \tau } \right)}^ - }} \right).\nonumber
	\end{align}
	Using the iterative method for any $t \in \left( {0,T} \right)$ yields
	\begin{align}\label{74}
		{V_{\sigma (t)}}\left( t \right) \le& {e^{ - \alpha {T_ \downarrow }\left( {{t_M},t} \right)}}{V_{\sigma ({t_M})}}\left( {{t_M} + \tau } \right)- \int_{{t_M} + \tau }^t {{e^{ - \alpha \left( {t - s} \right)}}\Gamma (s)ds} \nonumber\\
		\le& \mu {e^{ - \alpha {T_ \downarrow }\left( {{t_M},t} \right)}}{V_{\sigma ({t_{M - 1}})}}\left( {{t_M} + \tau } \right)- \int_{{t_M} + \tau }^t {{e^{ - \alpha \left( {t - s} \right)}}\Gamma (s)ds} \nonumber\\
		\le& \mu {e^{ - \alpha {T_ \downarrow }\left( {{t_M},t} \right)}}{e^{\beta {T_ \uparrow }\left( {{t_M},t} \right)}}{V_{\sigma ({t_{M - 1}})}}({t_M})- \int_{{t_M} + \tau }^t {{e^{ - \alpha \left( {t - s} \right)}}\Gamma (s)ds}  \nonumber\\
		&- \mu {e^{ - \alpha {T_ \downarrow }\left( {{t_M},t} \right)}}\int_{{t_M}}^{{t_M} + \tau} {{e^{\beta \left( {{t_M} + \tau  - s} \right)}}\Gamma (s)ds}\nonumber\\
		\le&  \cdots \nonumber\\
		\le& {\mu ^{{N_\sigma }\left( {{t_0},t} \right)}}{e^{ - \alpha {T_ \downarrow }\left( {{t_0},t} \right)}}{e^{\beta {T_ \uparrow }\left( {{t_0},t} \right)}}{V_{\sigma (0)}}(0) - \int_0^t {{e^{ - \alpha {T_ \downarrow }\left( {s,t} \right)}}{e^{\beta {T_ \uparrow }\left( {s,t} \right)}}{\mu ^{{N_\sigma }(s,t)}}\Gamma (s)ds} \nonumber\\
		\le& {\mu ^{{N_\sigma }\left( {{t_0},t} \right)}}{e^{ - \alpha T\left( {{t_0},t} \right)}}{e^{{N_\sigma }\left( {{t_0},t} \right)\left( {\alpha  + \beta } \right){\tau _d}}}{V_{\sigma (0)}}(0) - \int_0^t {{e^{ - \alpha T\left( {s,t} \right)}}{e^{{N_\sigma }\left( {s,t} \right)\left( {\alpha  + \beta } \right){\tau _d}}}{\mu ^{{N_\sigma }(s,t)}}\Gamma (s)ds}.
	\end{align}
	Under zero initial condition, (\ref{74}) gives
	\begin{align}
		0 \le& V(t) \le \int_{0}^{t} {e^{ - \alpha T\left( {s,t} \right)}}{e^{{N_\sigma }\left( {s,t} \right)\left( {\alpha  + \beta } \right){\tau _d}}}{\mu ^{{N_\sigma }(s,t)}}\left[{\gamma ^2}{{\omega ^T}(s)} \omega (s )-y^T(s)y(s)\right]ds,\nonumber
	\end{align}
	which implies that
	\begin{align}\label{76}
		&\int_{0}^{t} {e^{ - \alpha \left(t -{s}\right)} e^{ N_\sigma(s,t)\left(\alpha+\beta \right){\tau _d}}}{\mu ^{N_\sigma(s,t)}}y^T(s)y(s)ds\le\int_{0}^{t} {e^{ - \alpha \left(t -{s}\right)} e^{ N_\sigma(s,t)\left(\alpha+\beta \right){\tau _d}}}{\mu ^{N_\sigma(s,t)}}{\gamma ^2}{{\omega ^T}(s)} \omega (s)ds,
	\end{align}
	Multiplying both sides of (\ref{76}) by $e^{-N_\sigma\left(0,t\right)\left(\alpha+\beta \right){\tau _d}}\mu^{-N_\sigma\left(0,t\right)}$ yields
	\begin{align}\label{77}
		&\int_{0}^{t} {e^{ - \alpha \left(t -{s}\right)} }{e ^{-N_\sigma(0,s) \left[\left(\alpha+\beta \right){\tau _d}+\ln \mu\right]}}y^T(s)y(s)ds\le\int_{0}^{t} {e^{ - \alpha \left(t -{s}\right)}}{e ^{-N_\sigma(0,s) \left[\left(\alpha+\beta \right){\tau _d}+\ln \mu\right]}}{\gamma ^2}{{\omega ^T}(s)} \omega (s)ds.
	\end{align}
	Due to $N_{\sigma}(0,s) \le N_0 + T(0,s)/\tau_a$ and $\tau_a \ge \left[\left(\alpha+\beta\right)\tau_d+\ln \mu\right]/\alpha$, then $0 \le N_{\sigma}(0,s) \le N_0+s/\tau_a \le N_0+\alpha s/\left[\left(\alpha+\beta\right)\tau_d+\ln \mu\right]$. Substituting this inequality into (\ref{77}) yields
	\begin{align}
		&\int_0^t {{e^{ - {N_0}\left[ {\left( {\alpha  + \beta } \right){\tau _d} + \ln \mu } \right]}}{y^T}(s)y(s)ds} \le \int_0^t {{e^{\alpha s}}{\gamma ^2}{\omega ^T}(s)\omega (s)ds}.\nonumber
	\end{align}
	Set $t = T$,
	\begin{align}
		&\int_0^T {{y^T}} (s)y(s)ds \le {e^{{N_0}\left[ {\left( {\alpha  + \beta } \right){\tau _d} + \ln \mu } \right] + \alpha T}}{\gamma ^2}\int_0^T {{\omega ^T}(s)} \omega (s)ds.\nonumber
	\end{align}
	According to the definition \ref{definition3}, the system is finite-time bounded with disturbance attenuation performance $\gamma_s = e^{{N_0}\left[ {\left( {\alpha  + \beta } \right){\tau _d} + \ln \mu } \right]/2+\alpha T/2}\gamma$.
\end{proof}
\section*{CRediT authorship contribution statement}
\textbf{Mo-Ran Liu}: Methodology, Software, Validation, Writing – original draft. \textbf{Zhen Wu}: Conceptualization, Investigation,
Visualization, Writing – review $\&$ editing. \textbf{Xian Du}: Writing – review $\&$ editing, Supervision. \textbf{Zhongyang Fei}: Writing – review $\&$ editing, Supervision, Funding acquisition.
\section*{Declaration of competing interest}
The authors declare that they have no known competing financial interests or personal relationships that could have appeared to influence the work reported in this paper.
\section*{Data availability}
No data was used for the research described in the article.
% Numbered list
% Use the style of numbering in square brackets.
% If nothing is used, default style will be taken.
%\begin{enumerate}[a)]
%\item 
%\item 
%\item 
%\end{enumerate}  

% Unnumbered list
%\begin{itemize}
%\item 
%\item 
%\item 
%\end{itemize}  

% Description list
%\begin{description}
%\item[]
%\item[] 
%\item[] 
%\end{description}  

% Figure
% \begin{figure}[<options>]
	% 	\centering
	% 		\includegraphics[<options>]{}
	% 	  \caption{}\label{fig1}
	% \end{figure}

% \begin{table}[<options>]
	% \caption{}\label{tbl1}
	% \begin{tabular*}{\tblwidth}{@{}LL@{}}
		% \toprule
		%   &  \\ % Table header row
		% \midrule
		%  & \\
		%  & \\
		%  & \\
		%  & \\
		% \bottomrule
		% \end{tabular*}
	% \end{table}

% Uncomment and use as the case may be
%\begin{theorem} 
%\end{theorem}

% Uncomment and use as the case may be
%\begin{lemma} 
%\end{lemma}

%% The Appendices part is started with the command \appendix;
%% appendix sections are then done as normal sections
%% \appendix

% To print the credit authorship contribution details
% \printcredits

%% Loading bibliography style file
%\bibliographystyle{model1-num-names}
\bibliographystyle{unsrt}

% Loading bibliography database
\bibliography{cas-refs}

% Biography
% \bio{}
% Here goes the biography details.
% \endbio

% \bio{pic1}
% Here goes the biography details.
% \endbio

%% Authors are advised to use a BibTeX database file for their reference list.
%% The provided style file elsarticle-num.bst formats references in the required Procedia style

%% For references without a BibTeX database:

% \begin{thebibliography}{00}

%% \bibitem must have the following form:
%%   \bibitem{key}...
%%

% \bibitem{}

% \end{thebibliography}

\end{document}